\def\eqref#1{equation~\ref{#1}}
\def\1{\bm{1}}
\DeclareMathAlphabet{\mathsfit}{\encodingdefault}{\sfdefault}{m}{sl}
\SetMathAlphabet{\mathsfit}{bold}{\encodingdefault}{\sfdefault}{bx}{n}
\newtheorem{theorem}{Theorem}
\newcommand{\name}{\texttt{XGNNCert}}
\theoremstyle{plain}
\theoremstyle{definition}
\newtheorem{definition}[theorem]{Definition}
\theoremstyle{remark}
\title{Provably Robust Explainable Graph Neural Networks against Graph Perturbation Attacks}
\author{Antiquus S.~Hippocampus, Natalia Cerebro \& Amelie P. Amygdale \thanks{ Use footnote for providing further information
about author (webpage, alternative address)---\emph{not} for acknowledging
funding agencies.  Funding acknowledgements go at the end of the paper.} \\
Department of Computer Science\\
Cranberry-Lemon University\\
Pittsburgh, PA 15213, USA \\
\texttt{\{hippo,brain,jen\}@cs.cranberry-lemon.edu} \\
\And
Ji Q. Ren \& Yevgeny LeNet \\
Department of Computational Neuroscience \\
University of the Witwatersrand \\
Joburg, South Africa \\
\texttt{\{robot,net\}@wits.ac.za} \\
\AND
Coauthor \\
Affiliation \\
Address \\
\texttt{email}
}
\author{
{\rm Jiate Li$^{1}$,  Meng Pang$^{2}$, Yun Dong$^{3}$, Jinyuan Jia$^{4}$, Binghui Wang$^{1}$}\\
$^1$Illinois Institute of Technology, USA, $^2$Nanchang University, China \\ $^3$Milwaukee School of Engineering, USA  
$^4$The Pennsylvania State University, USA 
}
\begin{document}

\setcounter{theorem}{0}

\maketitle

\begin{abstract}
Explainable Graph Neural Networks (XGNNs) have garnered increasing attention for enhancing the transparency of Graph Neural Networks (GNNs), which are the leading methods for learning from graph-structured data. While existing XGNNs primarily focus on improving explanation quality, their robustness under adversarial attacks remains largely unexplored. Recent studies have shown that even minor perturbations to graph structure can significantly alter the explanation outcomes of XGNNs, posing serious risks in safety-critical applications such as drug discovery.

In this paper, we take the first step toward addressing this challenge by introducing {\name}, the first provably robust XGNN. {\name} offers formal guarantees that the explanation results will remain consistent, even under worst-case graph perturbation attacks, as long as the number of altered edges is within a bounded limit. Importantly, this robustness is achieved without compromising the original GNN’s predictive performance.   
Evaluation results on multiple graph datasets and GNN explainers show the effectiveness of {\name}. Source code is available at \url{https://github.com/JetRichardLee/XGNNCert}. 
\end{abstract}

\section{Introduction}
\label{sec:intro}

Explainable Graph Neural Network (XGNN) has emerged recently  to foster the trust of using GNNs---it provides a human-understandable way to interpret the prediction by GNNs. Particularly, given a graph and a predicted node/graph label by a GNN, XGNN aims to uncover the \emph{explanatory edges} (and the connected nodes) from the raw graph that is crucial for predicting the label (see Figure \ref{fig:Explanation}(a) an example). Various XGNN methods~\citep{GNNEx19,DBLP:journals/corr/abs-2011-04573/PGExplainer,DBLP:journals/corr/abs-2102-05152/subgraphX,zhang2022gstarx,wang2023/gnninterpreter,behnam2024graph} have been proposed from different perspectives (more details see Section~\ref{sec:related}) and they have also been widely adopted in applications including  disease diagnosis~\citep{pfeifer2022gnn}, drug analysis~\citep{yang2022mgraphdta,Drug_repurposing2023}, fake news spreader detection~\citep{rath2021scarlet}, and molecular property prediction~\cite{wu2023chemistry}. 
 
While existing works focus on enhancing the explanation performance, the robustness of XGNNs is largely unexplored. 
\citet{li2024graph} observed that  
 well-known XGNN methods (e.g., GNNExplainer~\citep{GNNEx19}, PGExplainer~\citep{DBLP:journals/corr/abs-2011-04573/PGExplainer}) are vulnerable to graph perturbation attacks
---Given a graph, a  GNN model, and a GNN explainer, an adversary can slightly perturb a few edges such that the GNN predictions are accurate, but the explanatory edges outputted by the GNN explainer on the perturbed graph is drastically changed. This attack could cause serious issues in the safety/security-critical applications such as drug analysis. For instance, \cite{Drug_repurposing2023} designs an XGNN tool Drug-Explorer for drug repurposing (reuse existing drugs for new diseases), where users input a drug graph and the tool outputs the visualized explanation result (i.e., important chemical structure) useful for curing the diseases. 
If such tool is misled on adversarial purposes (i.e., adversary inputs a carefully designed perturbed graph), it may recommend invalid drugs with harmful side-effects. Therefore, it is crucial to design defenses for GNN explainers against these attacks.

Generally, defense strategies can be classified as \emph{empirical defense} and \emph{certified defense}. Empirical defenses often can be  broken by stronger/adaptive attacks, as verified in many existing works on defending against adversarial examples~\citep{carlini2019evaluating} and adversarial graphs~\citep{zhang2020backdoor,yang2024distributed}. 
We notice two empirical defense methods  \citep{bajaj2021robust,wang2023vinfor} have been proposed to  robustify XGNNs against graph perturbations. 
Likewise, we found they are ineffective against stronger attacks proposed in \cite{li2024graph} (see Table~\ref{tab:Effectiveness}). 
In this paper, we hence focus on designing certified defense for XGNNs against graph perturbation attacks. {An XGNN is said certifiably robust against a bounded graph perturbation if, for any graph perturbation attack with a perturbation budget that does not exceed this bound, the XGNN consistently produces the same correct explanation (formal definition is in Definition \ref{def:certxgnn}).} 
There are several technical challenges. 
First, GNN explanation and GNN classification are coupled in XGNNs. Robust GNN classifiers do not imply robust GNN explainers, and 
claiming robust explanations without correct GNN classification is meaningless\footnote{Our additional experiments in Figure~\ref{fig:deceive} in Appendix~\ref{app:evaluation} also validate that if the GNN classifier is deceived, the explanation result would be drastically different compared with the groundtruth explanation.}. It is thus necessary to ensure both robust GNN classification and robust GNN explanation. 
Second, there is a fundamental difference to guarantee the robustness of GNN classifiers and GNN explainers. This is because GNN classifiers map a graph to a label, while  GNN explainers map a graph to an edge set. All existing certified defenses~\citep{jia2020certified,jin2020certified, wang2021certified,xia2024gnncert,li2025agnncert} against graph perturbations focus on the robustness of GNN \emph{classifiers} and cannot be applied to robustify GNN explainers.

In this work, we propose {\name}, the first {certifiably robust} XGNN against graph perturbation attacks. Given a testing graph, a GNN classifier, and 
 a GNN explainer,
{\name} consists of three main steps. First, we are inspired by 
existing defenses for  classification~\citep{levine2020randomized, jia2021intrinsic,jia2022certified,xia2024gnncert,yang2024distributed,li2025agnncert} that divide an input (e.g., image) into multiple non-overlapping parts (e.g., patches). 
However, directly applying the idea to divide the test graph into multiple non-overlapping subgraphs does not work well for robustifying GNN explainers. 
One reason is that it is hard for the GNN explainer to determine the groundtruth explanatory edges from each subgraph due to its sparsity. 
To address it, we propose to leverage both the test graph and its complete graph for "hybrid" subgraph generation. An innovation design here is only a bounded number of hybrid subgraphs could be affected when the 
test graph is adversarially perturbed with a bounded perturbation, which is the requirement for deriving the robustness guarantee.  
Second, we build a majority-vote classifier on GNN predictions for the generated hybrid subgraphs,  
and a majority-vote explainer on GNN explanations for interpreting 
the prediction 
of the hybrid subgraphs. 
Last, we derive the certified robustness guarantee. 
Particularly, {\name} guarantees the majority-vote classifier yields the same prediction, and majority-vote explainer outputs close explanatory edges for the perturbed testing graph under arbitrary graph perturbations, when the number of perturbed edges is bounded (which we call \emph{certified perturbation size}). 

We evaluate {\name} on multiple XGNN methods on both synthetic and real-world graph dataset with groundtruth explanations. 
Our results show {\name} does not affect the normal explanation accuracy without attack. Moreover, {\name} shows it can guarantees at least 2
edges are from the 5 groundtruth explanatory edges, when averaged 6.2 edges are arbitrarily perturbed in testing graphs from the SG+House dataset.  
Our major contributions are as follows:
\begin{itemize}[leftmargin=*]
\vspace{-2mm}
    \item We propose {\name}, to our best knowledge, the first certified defense for explainable GNN  against graph perturbation attacks. 
    \item We derive the deterministic robustness guarantee of {\name}. 
    \item We evaluate {\name} on multiple graph datasets and GNN explainers and show its effectiveness.  
\end{itemize}

\section{Background and Problem Formulation}
\label{sec:background}

{\bf GNN and XGNN:} 
 We denote a graph as ${G} = (\mathcal{V}, \mathcal{E})$, that consists of a set of nodes $v \in \mathcal{V}$ and edges $e_{u,v} \in \mathcal{E}$. A GNN, denoted as $f$, takes a graph $G$ as input and outputs a predicted label $y = f(G) \in \mathcal{C}$, with $\mathcal{C}$ including all possible labels. For instance, $y$ can be defined on the graph $G$ in graph classification, or on a specific node $v \in \mathcal{V}$ in node classification. An XGNN, denoted as $g$, uncovers the key component in $G$ that contributes to the GNN prediction $y$. In this paper, we focus on the widely-studied edge explanations where $g$ takes $(G,y)$ as input and determines the important edges in $G$. Particularly, this type of XGNN  learns importance scores $\mathbf{m}$ for all edges $\mathcal{E}$; and selects the edges $\mathcal{E}_k \subseteq \mathcal{E}$ with the top $k$ scores in $\mathbf{m}$ as the explanatory edges, where $k$ is a hyperparameter of the XGNN. Formally, ${\bf m} = g(G,y), \mathcal{E}_k = \mathcal{E}.top_{k}({\bf m}) = \mathcal{E}.top_{k}(g(G,y))$. 

\begin{figure}[!t]
\vspace{-4mm}
    \centering
    \captionsetup[subfloat]{labelsep=none, format=plain, labelformat=empty}

    \subfloat[\small (a) GNN Explanation]{
        \includegraphics[width=0.48\linewidth]{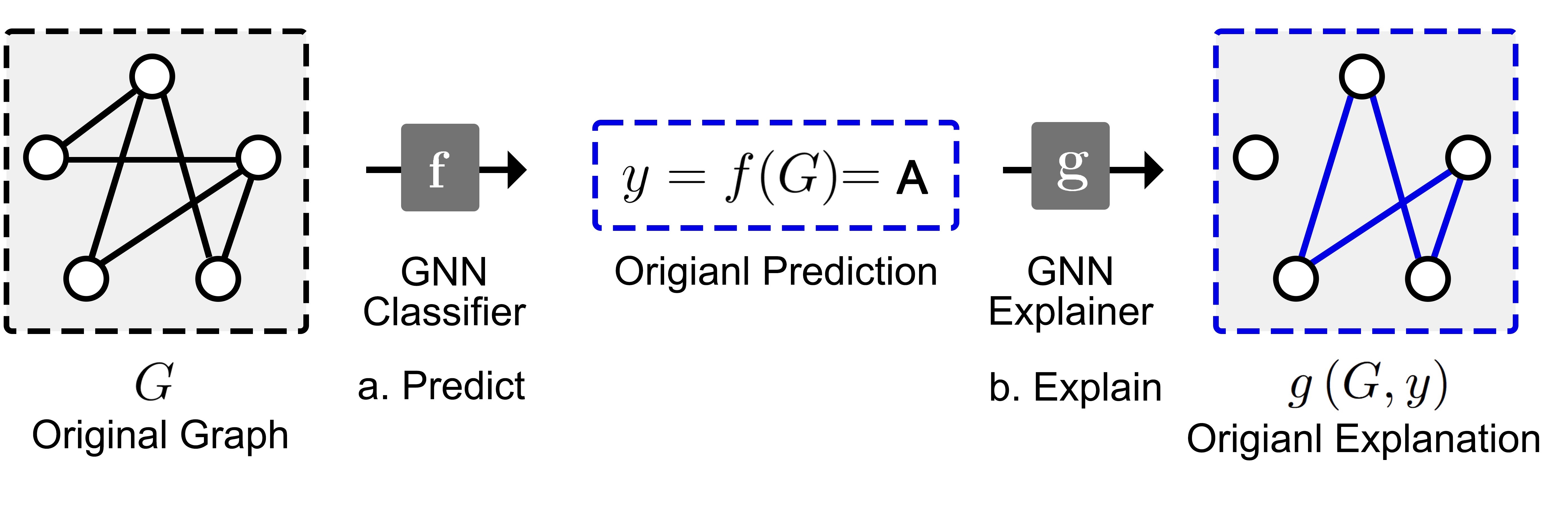}
    }\hfil
    \subfloat[\small (b) Adversarial Attack on GNN Explanation]{
        \includegraphics[width=0.48\linewidth]{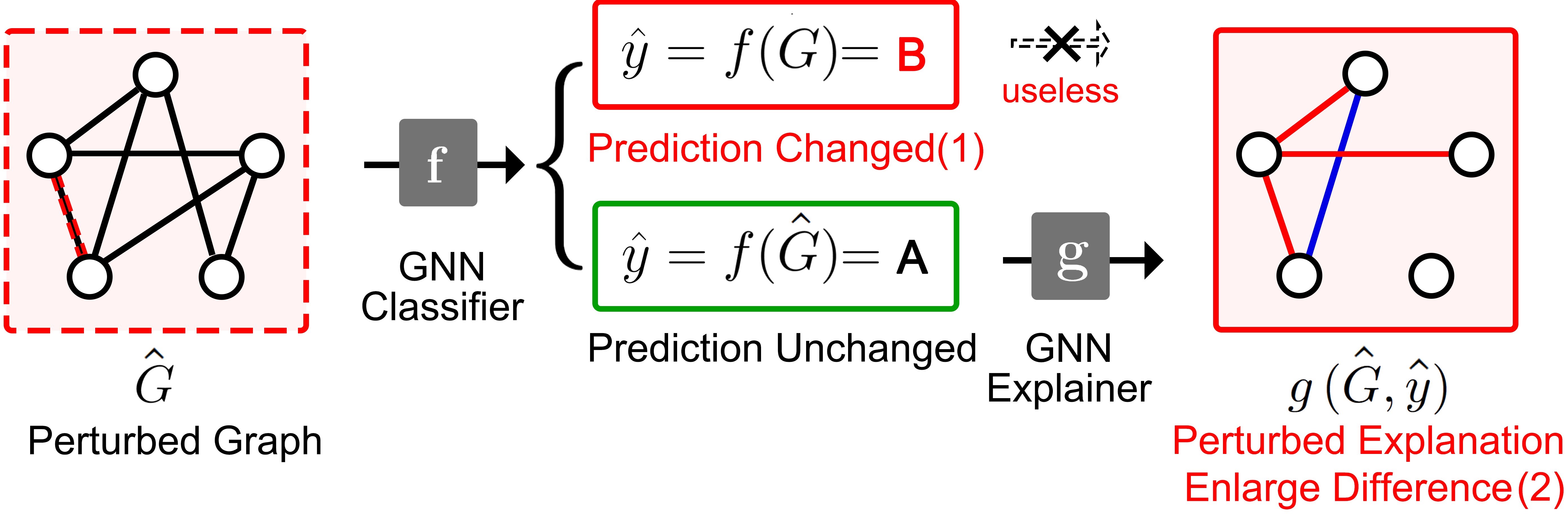}
    }
    \vspace{-2mm}
    \caption{(a) GNN for graph classification and GNN explanation. 
    A GNN classifier $f$ first predicts a label $y$ for the graph $G$, and then a GNN explainer $g$ interprets the predicted label $y$ to produce the explanatory edges $\mathcal{E}_k$. 
    (b) Two possible graph perturbation attacks on the GNN explainer $g$: 1) the GNN prediction $\hat{y}$ on the perturbed graph $\hat{G}$ is different from $y$; 
    2) the GNN prediction on $\hat{G}$ is kept, but the explanatory edges $\hat{\mathcal{E}}_k$ outputted by $g$ after the attack is largely different from $\mathcal{E}_k$.
    }
    \label{fig:Explanation}
    \vspace{-4mm}
\end{figure}

{\bf Adversarial Attack on XGNN:} Given a graph $G$ and a prediction $y$ (by a GNN $f$), an XGNN $g$ and its explanatory edges $\mathcal{E}_k$. We notice that an attacker can perturb the graph structure to mislead the XGNN $g$. To be specific, the attacker could delete edges from $G$ (to ensure stealthiness, the attacker does not delete edges in the explanatory edges   $\mathcal{E}_k$, as otherwise it can be easily identified) or add new edge into $G$. We denote the adversarially perturbed graph as $\hat{G} = (\mathcal{V}, \hat{\mathcal{E}})$, with an attack budget $M$ (i.e., the total number of deleted and added edges is no more than $M$). On the perturbed graph $\hat{G}$, $f$ gives a new prediction $\hat{y}=f(\hat{G})$, and $g$ produces new explanatory edges $\hat{\mathcal{E}}_k = \hat{\mathcal{E}}.top_k(g(\hat{G},\hat{y}))$.

We assume the attacker has two ways to attack an XGNN, as illustrated in Fig~\ref{fig:Explanation} (b). 
\begin{enumerate}[leftmargin=*]
\vspace{-2mm}
\item \emph{The attacker simply misleads the GNN prediction}. Note that if the prediction is changed (i.e., $\hat{y} \neq y$), even $\hat{\mathcal{E}}_k = \mathcal{E}_k$, the explanation is not useful as it explains the wrong prediction. This attack can be achieved via existing evasion attacks on GNNs, e.g., ~\cite{dai2018adversarial,zugner2018adversarial,xu2019topology,mu2021hard,wang2022bandits}.

\item \emph{A more stealthy attack keeps the correct prediction, but largely deviates the explanation result.}  That is, the attacker aims to largely enlarge the difference 
between 
$\mathcal{E}_k$ and $\hat{\mathcal{E}}_k$ with  the prediction unchanged~\citep{li2024graph}. 

\end{enumerate}

{\bf Problem Formulation:} 
The above results show existing XGNNs are vulnerable to effective and stealthy graph perturbation attacks. Also, 
as various works~\citep{carlini2019evaluating,mujkanovic2022defenses} have demonstrated, empirical defenses often can be  broken by advanced/stronger attacks. Such observations and past experiences motivate us 
to design \emph{certifiably robust XGNNs}, i.e., that can defend against the \emph{worst-case} graph perturbation attacks with a bounded attack budget. 

\begin{definition}[$(M_{\lambda},\lambda)$-Certifiably robust XGNN]
\label{def:certxgnn}
We say an XGNN is $(M_{\lambda},\lambda)-$certifiably robust, 
if, for \emph{any} graph perturbation attack with no more than $M_{\lambda}$ perturbed edges on a graph $G$, the GNN prediction on the induced perturbed graph $\hat{G}$ always equals to the prediction $y$ on $G$, and there are at least $\lambda$ ($\leq k$) same edges in the explanatory edges $\hat{\mathcal{E}}_k$ after the attack and the explanatory edges $\mathcal{E}_k$ without the attack. 
We also call $M_{\lambda}$ the \emph{certified perturbation size}. 
{Further, we denote by $M_{\lambda}^*$ the \emph{maximal} $M_{\lambda}$  associated with a $\lambda$, for which a specific XGNN remains certifiably robust.} 
\vspace{-2mm}
\end{definition}

\emph{Remark:} A smaller $\lambda$ implies a larger {$M_\lambda^*$}.  
When $\lambda=k$, a robust XGNN should guarantee $\hat{\mathcal{E}}_k = \mathcal{E}_k$. 
In this paper, we focus on deriving the certifiably robust XGNN for the graph-level classification task. The adaptation of the proposed defense techniques (in Section \ref{sec:xgnncert}) to other graph-related tasks, such as node-level classification and edge-level classification, is discussed in Appendix \ref{app:discussion}.

\section{{\name}: Our Certifiably Robust XGNN}
\label{sec:xgnncert}

\vspace{-2mm}
In this section, we propose  {\name}, our certifiably robust XGNN against graph perturbation attacks.   
Given a testing graph, a GNN classifier, and a GNN explainer, {\name} consists of three major steps. 
\emph{1) Hybrid subgraphs generation:} it aims to generate a set of subgraphs that leverage  the edges from both the testing graph and its  complete graph. 
\emph{2) Majority-voting based classification and explanation:} it builds a majority-vote based classifier (called voting classifier) on GNN predictions for the hybrid subgraphs,
as well as a majority-vote based explainer (called voting explainer) on GNN explanations for interpreting the predicted label of the hybrid subgraphs. 
\emph{3) Deriving the certified robustness guarantee:} based on the generated subgraphs, our voting classifier and voting explainer, it derives the maximum perturbed edges, such that our voting classifier guarantees the same prediction on the perturbed graph and testing graph, and our voting explainer guarantees the explanation results on the perturbed graph and the clean graph are close.
 Figure~\ref{fig:overview} shows an overview of our {\name}. 

\begin{figure}[t]
\vspace{-4mm}
    \centering
    \captionsetup[subfloat]{labelsep=none, format=plain, labelformat=empty}

    \includegraphics[width=\linewidth]{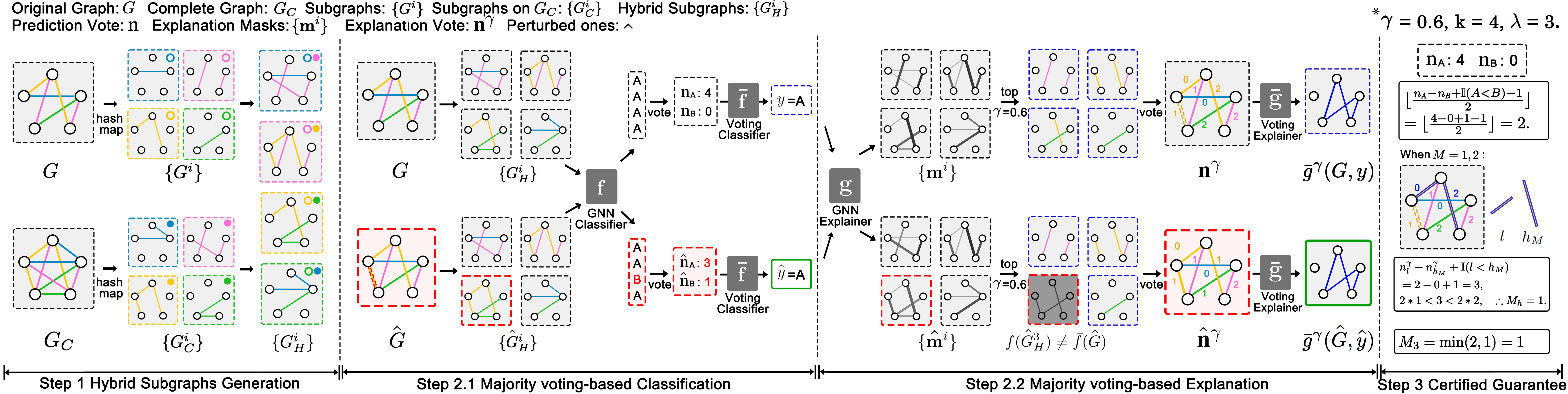}
    \vspace{-4mm}
    \caption{Overview of the proposed three-step certifiably robust XGNN.
    }
    \label{fig:overview}
   \vspace{-2mm}
\end{figure}

\subsection{Hybrid Subgraphs Generation}
\label{sec:Hybrid}
A straightforward idea is to adapt the existing defense strategy for  classification~\citep{levine2020randomized,xiang2021patchguard, jia2021intrinsic,jia2022certified,xia2024gnncert} that divides an input into multiple \emph{non-overlapping} parts. 
Particularly, one can divide a graph into multiple non-overlapping subgraphs, such that only a bounded number of  subgraphs are affected when the graph is adversarially perturbed under a bounded perturbation. 
However, this strategy does not work well to robustify GNN explanation (Our results in Section~\ref{sec:eval} also validate this) due to two reasons: 
(1) Every edge in a graph appears only once in all subgraphs. This  makes it hard for the GNN explainer to ensure the groundtruth explanatory edges to have higher scores than non-explanatory edges. 
(2) All subgraphs only contain existent edges in the graph, while nonexistent edges can be inserted into the graph during the attack and their importance for explanation needs to be also considered. 
To address the challenge, we develop a hybrid subgraph generation method, that consists of two steps shown below.  

{\bf Generating Subgraph Indexes via Hash Mapping:} 
We use the hash function (e.g., MD5) to generate the subgraph indexes {as done in \citet{xia2024gnncert,yang2024distributed}}\footnote{Our theoretical results require the graph division function has two important properties: 1) It is deterministic, such that each edge and node in a graph is deterministically mapped into only one subgraph. This property is the core to derive our theoretical results.  2) It is independent of the graph structure, as otherwise an attacker may reverse-engineer the function to find the relation between the output and input, and possibly break the defense. The used hash function can achieve both properties.}. A hash function takes input as a bit string and outputs an integer (e.g., within a range $[0,2^{128}-1]$). Here, we propose to use the string of edge index as the input to the hash function. For instance, for an edge $e=(u,v)$, we denote its string as $\textrm{str}(u)+\textrm{str}(v)$, where {the $\textrm{str}(\cdot)$ function  transfers the index number into a string in a fixed length (filled with prefix "0"s)}, and``+" means the string concatenation\footnote{For instance, with a 4-bit length, an edge 12-21 is represented as the string "0012" and "0021", respectively. Then the concatenated string between the edge 12-21 is "00120021".}. 
Then we can map each edge using the hash function to a unique index.
Specifically, we denote the hash function as $h$ and assume $T$ groups 
in total. Then for every edge $e=(u,v)$, we compute its subgraph index $i_e$ as\footnote{We put the node with a smaller index (say  $u$) first and let 
$h[\mathrm{str}(v) + \mathrm{str}(u)]=h[\mathrm{str}(u) + \mathrm{str}(v)]$.}. 
\begin{align}
\label{eqn:hashidx}
i_e = h[\mathrm{str}(u) + \mathrm{str}(v)] \, \, \mathrm{mod} \, \, T+1. 
\end{align}

{\bf Generating Hybrid Subgraphs:} Based on the hash function, we can construct a set of $T$ subgraphs for any graph. However, instead of only using existent edges in the given graph to construct subgraphs, we propose to also use \emph{nonexistent edges} to promote the robustness performance for GNN explainers. \emph{A key requirement is: how to guarantee only a bounded number of subgraphs are affected when the original graph is adversarially perturbed.} To address it, we innovatively propose to use the \emph{complete graph},
and our theoretical results in Theorem~\ref{thm:bounddiff} show the requirement can be satisfied.

\emph{Dividing the input graph into subgraphs:} 
For an input graph $G=(\mathcal{V},\mathcal{E})$, we use 
$\mathcal{E}^i$ to denote the set of edges whose subgraph index is $i$, i.e., 
$\mathcal{E}^i = \{\forall e \in \mathcal{E}: i_e= i \}.$ 
Then, we can construct $T$ subgraphs for $G$ as $\{ {G}^i = (\mathcal{V}, \mathcal{E}^i): i=1,2,\cdots, T\}$. 

\emph{Dividing the complete graph into subgraphs:} 
We denote the complete graph of $G$ as $G_C = (\mathcal{V}, \mathcal{E}_C), \mathcal{E}_{C}=\{(u,v), \forall u,v\in \mathcal{V}: u<v\}$. Similarly,  we can divide  $G_C$ into $T$ subgraphs using the same hash function. 
First, the edges having a subgraph index $i$ is denoted as $\mathcal{E}^{i}_{C} = \{\forall e \in \mathcal{E}_{C}: i_{e}=i\}$. Then, we create the $T$ subgraphs for $G_C$ as: $\{{G}_C^i = (\mathcal{V}, \mathcal{E}_C^i): i=1,2,\cdots, T\}$. 

\emph{Hybrid subgraphs:} Now we combine subgraphs $\{G^i\}$ with $\{G_C^i\}$ to construct the hybrid subgraphs. 
For each subgraph $G^i$, we propose to combine it with a fraction (say $p$) of the subgraphs in $\{G_C^i\}$ to generate a hybrid subgraph, denoted as $G_H^i$. There are many ways for the combination, and the only constraint is that the subgraph $G_C^i$ with the same subgraph index $i$ as $G^i$ is not chosen in  $G_H^i$, in order to maintain the information from the original subgraph $G^i$ (otherwise it is overlaid by $G_C^i$). 
Let $\mathcal{T}_{-i} = \mathcal{T}\setminus{i}$ be the index set not including $i$. For instance, we can choose $\lfloor pT\rfloor$ indexes, denoted as $\mathcal{T}_{-i}^p$, from $\mathcal{T}_{-i}$ uniformly at random. 
Then a constructed hybrid subgraph is  $G_H^i = (\mathcal{V}, {\mathcal{E}}_H^{i})$, where 
\begin{align}\label{eq:hybrid}
{\mathcal{E}}_H^{i} =   (\cup_{j \in \mathcal{T}_{-i}^p} \mathcal{E}^{j}_{C})\cup \mathcal{E}^{i}.  
\end{align}
Note that a too small or too large $p$ would degrade the explanation performance. This is because a too large $p$ would make excessive nonexistent edges be added in each $G_H^i$, and a too small $p$ would make explanatory edges be difficult to have higher important scores than non-explanatory edges. Our results show the best performance is often achieved with a modest $p$, e.g., $p \in [0.2,0.4]$.

With the built hybrid subgraphs, we prove in Theorem~\ref{thm:bounddiff} that for any two graphs with $M$ different edges (but same nodes), there are at most $M$ different ones between their respective hybrid subgraphs. 
\emph{We emphasize this is the key property to derive our certified robustness guarantee in Section~\ref{sec:Certify}.}   

\begin{theorem}
[Bounded number of different subgraphs]
\label{thm:bounddiff}
For any two graphs $G=(\mathcal{V},\mathcal{E})$, $\hat{G}=(\mathcal{V}, \hat{\mathcal{E}})$ satisfying $|\mathcal{E}\setminus \hat{\mathcal{E}}| = M$. 
The corresponding hybrid subgraphs generated using the above strategy are denoted as  $\{ G_H^i\}$ and $\{ \hat{G}_H^i\}$, respectively. 
Then $\{ G_H^i\}$ and $\{ \hat{G}_H^i\}$ have at most $M$ different graphs. 
\vspace{-2mm}
\end{theorem}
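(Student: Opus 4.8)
The plan is to exploit the single structural fact that makes the construction work: the hash index $i_e$ assigned to an edge $e=(u,v)$ in \eqref{eqn:hashidx} is a function of the edge's endpoints alone, and is therefore \emph{independent of which graph $e$ lives in}. Since $G$ and $\hat{G}$ share the node set $\mathcal{V}$, they share the complete graph $G_C$ and hence the identical partition $\{\mathcal{E}_C^i\}_{i=1}^T$. Moreover the index sets $\mathcal{T}_{-i}^p$ used to assemble each hybrid subgraph depend only on $i$, $T$ and $p$ (and on the shared randomness of the construction), not on the edges of the input graph, so they too coincide for $G$ and $\hat{G}$. Consequently, for every $i$ the ``scaffolding'' portion $\cup_{j\in\mathcal{T}_{-i}^p}\mathcal{E}_C^j$ of $G_H^i$ in \eqref{eq:hybrid} is exactly the same as that of $\hat{G}_H^i$.

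First I would fix an index $i$ and compare $G_H^i$ with $\hat{G}_H^i$, grouping edges by their hash value. An edge of index $j\in\mathcal{T}_{-i}^p$ belongs to $G_H^i$ iff it belongs to $\mathcal{E}_C^j$, which is the same condition for both graphs; an edge of index $j\notin\mathcal{T}_{-i}^p\cup\{i\}$ belongs to neither hybrid subgraph. The only index whose edges are contributed by the \emph{input} graph rather than by the complete graph is $i$ itself, and this is precisely why the construction excludes $i$ from $\mathcal{T}_{-i}^p$: because each edge has a unique hash index, no index-$i$ edge appears in any $\mathcal{E}_C^j$ with $j\neq i$, so the scaffolding can neither supply nor mask the index-$i$ edges. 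Hence $G_H^i$ and $\hat{G}_H^i$ agree on every edge of index $\neq i$, and they differ if and only if $\mathcal{E}^i\neq\hat{\mathcal{E}}^i$, i.e. iff some edge of index $i$ lies in exactly one of $\mathcal{E},\hat{\mathcal{E}}$.

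It then remains to count. Writing $\mathcal{E}\triangle\hat{\mathcal{E}}$ for the set of $M$ edges on which the two graphs disagree (added or deleted), the previous step shows that index $i$ produces a differing hybrid subgraph only if at least one edge of $\mathcal{E}\triangle\hat{\mathcal{E}}$ hashes to $i$. Thus the number of differing indices is bounded by the number of \emph{distinct} hash values realized by the $M$ disagreeing edges, which is at most $M$, giving at most $M$ differing hybrid subgraphs as claimed. I expect the only genuine subtlety — and the step I would state most carefully — to be the disjointness claim in the second paragraph: one must verify that the index-$i$ edges sit in isolation inside $G_H^i$ (neither duplicated by, nor created from, the complete-graph pieces), which hinges jointly on the per-edge uniqueness of the hash index and on the explicit exclusion of $i$ from $\mathcal{T}_{-i}^p$. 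Everything beyond this is bookkeeping.
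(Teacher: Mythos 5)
Your proposal is correct and follows essentially the same route as the paper's proof: each perturbed edge can only affect the single hybrid subgraph whose index equals that edge's hash value, the complete-graph scaffolding is identical for $G$ and $\hat{G}$, and hash collisions can only reduce the count below $M$. Your write-up is in fact more careful than the paper's, in that it makes explicit that the index sets $\mathcal{T}_{-i}^p$ must be shared between the two constructions and that the index-$i$ edges cannot be duplicated or masked by the complete-graph pieces.
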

\begin{proof}
See Appendix~\ref{app:thm:bounddiff}.
\end{proof}

\subsection{Majority Voting-based Classification and Explanation}
\label{sec:Voting}

Inspired by existing works~\citep{levine2020randomized,jia2022certified,xia2024gnncert,yang2024distributed}, we propose to use the majority vote to aggregate the results on the hybrid subgraphs. 
We then design a voting classifier and voting explainer that can respectively act as the robust GNN classifier and robust GNN explainer, as expected. 
Assume we have a testing graph $G$ with label $y$, a set of $T$ hybrid subgraphs $\{G_H^{i}\}$ built from $G$, a GNN classifier $f$, and a GNN explainer $g$.  

{\bf Voting Classifier:} 
We denote by $n_c$ the votes of hybrid subgraphs classified as the label $c$ by $f$, i.e., 
{\begin{align}
\small
\label{eqn:labelcount}
n_c = \sum_{i=1}^{T}\mathbb{I}(f(G_H^{i})=c), \forall c \in \mathcal{C},
\end{align}}
where $\mathbb{I}(\cdot)$ is an indicator function.  Then, we define 
our voting classifier $\bar{f}$\footnote{$\bar{f}$ returns a smaller label index when ties exist.} as: 
\begin{align}
\bar{f}(G) = \underset{c \in \mathcal{C}}{{\arg\max}} \, n_c.
\vspace{-2mm}
\end{align}

{\bf Voting Explainer:} 
Recall that when a GNN explainer interprets the predicted label for a graph, it first learns the importance scores for all edges in this graph and then selects the edges with the highest scores as the explanatory edges. 
Motivated by this, we apply $g$ on the hybrid subgraphs  having the same predicted label as the majority-voted label to obtain the explanatory edges,
and then vote the explanatory edges from these hybrid subgraphs. 
Edges with top-$k$ scores are the final explanatory edges. 
Specifically, for each $G_H^{i}$, we apply $g$ to obtain its edges' importance scores $\textbf{m}^{i}=g(G_H^{i},\bar{f}(G))$.  
We define the votes ${n}_{e}^\gamma$ of each edge $e \in \mathcal{E}_{C}$ as the times that its importance score ${m}_{e}^{i}$ is no less than $\gamma$ fraction of the largest scores in every hybrid subgraph $G_H^{i}$ with the prediction $f(G_H^{i})= \bar{f}(G)$: 
{
\begin{align}
\label{eqn:votes}
{n}_{e}^\gamma = \sum_{i=1}^{T} \mathbb{I}({m}_{e}^{i} \geq {\bf m}_{(\gamma)}^{i}) \cdot \mathbb{I}(f(G_H^{i})= \bar{f}(G)), \forall e \in \mathcal{E}_{C},   
\end{align}
}
where ${\bf x}_{(\gamma)}$ means the $\gamma \cdot \texttt{size}({\bf x})$ largest element in ${\bf x}$ and $\gamma$ is a tuning hyperparameter (we will study its impact in our experiments). We denote  $\textbf{n}^{\gamma}$ as the set of votes for all edges in $\mathcal{E}_C$.  Then we define our voting explainer $\bar{g}^{\gamma}$ as outputting the edges from $G$ with the top-$k$ scores in $\textbf{n}^{\gamma}$\footnote{When two edges have the same $\textbf{n}^{\gamma}$, the edge with a smaller index is selected by $\bar{g}^{\gamma}$.}:
\begin{align}
\bar{g}^{\gamma}(G,\bar{f}(G)) = \mathcal{E}.top_{k}(\textbf{n}^{\gamma}).   
\end{align}

{ 
\emph{Remark:}
Traditional GNN classifiers are designed to be node permutation invariant \citep{kipf2017semi,velivckovic2018graph,xu2018powerful}, meaning that the model’s predictions remain consistent regardless of how the nodes in the graph are permuted.
In contrast, our voting classifier is node permutation variant due to the properties of the hash function. This implies that both the classification and explanation performances of {\name} may vary depending on the node ordering. However, we empirically observed that {\name}'s performance remains relatively stable across different node permutations (see Table \ref{tab:gnncert_runs} in Appendix \ref{app:discussion}). 
Moreover, recent studies \citep{Loukas2020What,papp2021dropgnn,huang2022going} suggest that node-order sensitivity can actually enhance the expressivity and generalization capabilities of GNNs. Additional discussions are provided in Appendix \ref{app:discussion}.
}

\subsection{Certified Robustness Guarantee}
\label{sec:Certify}
In this section, we derive the certified robustness guarantee against graph perturbation attacks using our graph division strategy and introduced robust voting classifier and voting explainer. 

We first define some notations. We let $y = \bar{f}(G)$ by assuming the voting classifier $\bar{f}$ has an accurate label prediction, and $\mathcal{E}_k = \bar{g}^{\gamma}(G,y)$ by assuming the voting explainer $\bar{g}$  has an accurate explanation.  
We denote $\bar{G} = (\mathcal{V}, \bar{\mathcal{E}})$ as the complement of $G$, and $\bar{\mathcal{E}}_M$ the edges  in $\bar{\mathcal{E}}$ with top-$M$ scores in $\textbf{n}^{\gamma}$. 
{We introduce the non-existent edges $\bar{\mathcal{E}}_M$ with top scores by considering that, in the worst-case attack with $M$ edge perturbations, $\bar{\mathcal{E}}_{M}$ would be chosen to compete with the true explanatory edges.}

\begin{theorem}
[Certified Perturbation Size $M_\lambda$ for a given $\lambda$] 
\label{thm:VerifyExp} 
Assume $y \in \mathcal{C}$ and ${b} \in \mathcal{C} \setminus \{y\}$ be the class with the highest votes $n_y$ and second highest votes by Eqn (\ref{eqn:labelcount}), respectively. 
 Assume further the edge $l \in \mathcal{E}_k$ is with the $\lambda$-th highest votes $n_{l}^{\gamma}$, and edge $h_M \in \bar{\mathcal{E}}_M \cup (\mathcal{E} \setminus \mathcal{E}_k)$ with the $(k-\lambda+1)$-th highest votes ${n}_{h_M}^{\gamma}$ in ${\textbf{n}}^{\gamma}$ by Eqn (\ref{eqn:votes}) ($h_M$ hence depends on $M$). Then $M_{\lambda}$ satisfies: 
\begin{align}
\label{eqn:CPS}
& M_\lambda \leq M^*= \min \big( \lfloor \frac{n_y-n_{b} + \mathbb{I}(y<b)-1}{2} \rfloor, M_h), \textrm{ where } \\ 
& M_h = \max \, M,  \quad {s.t.} \quad {n}_{l}^\gamma- {n}_{h_M}^\gamma + \mathbb{I}(l<h_M) > 2M. \label{Eq:ExpCondition}
\end{align}

\vspace{-4mm}
\end{theorem}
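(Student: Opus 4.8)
The plan is to establish the two guarantees bundled in Definition~\ref{def:certxgnn} separately --- stability of the voted label and survival of $\lambda$ explanatory edges --- and then take the binding constraint. Both rest on a single consequence of Theorem~\ref{thm:bounddiff}: if $\hat G$ differs from $G$ in at most $M$ edges, then at most $M$ of the hybrid subgraphs change, and every other subgraph (together with its GNN label and its edge-score vector) is identical before and after the attack. Since a single subgraph contributes at most $1$ to any class count $n_c$ and at most $1$ to any edge vote $n_e^\gamma$, I would first record the Lipschitz-type bounds $|n_c(\hat G)-n_c(G)|\le M$ for every $c$ and $|n_e^\gamma(\hat G)-n_e^\gamma(G)|\le M$ for every $e$, where the edge votes are taken with the label fixed at $y$ (legitimate once the classification part is settled, since only then is the same label used in the indicator of Eqn.~(\ref{eqn:votes})).

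For the classification guarantee I would bound the worst case: the adversary can only drive $n_y$ down and some competitor up, and the most damaging move is to flip $M$ subgraphs from $y$ to the runner-up $b$, giving $n_y(\hat G)\ge n_y-M$ and $n_b(\hat G)\le n_b+M$ while no other class gets closer. Using the index tie-break rule for $\bar f$, the label is preserved exactly when $n_y-M+\mathbb{I}(y<b)>n_b+M$, i.e. $n_y-n_b+\mathbb{I}(y<b)>2M$; solving this integer inequality yields $M\le\lfloor(n_y-n_b+\mathbb{I}(y<b)-1)/2\rfloor$, the first term of the minimum.

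For the explanation guarantee the core is a top-$k$ domination argument. Let $S$ be the $\lambda$ explanatory edges with the largest clean votes (the weakest of them being $l$, with vote $n_l^\gamma$), and let the competitor pool be $\bar{\mathcal{E}}_M\cup(\mathcal{E}\setminus\mathcal{E}_k)$; this over-approximates the edges that can appear in $\hat{\mathcal{E}}_k$ but not $\mathcal{E}_k$, because any added edge lies in $\bar{\mathcal{E}}$ and at most $M$ edges can be added, so the worst additions are the top-$M$ nonexistent edges $\bar{\mathcal{E}}_M$. Split the pool into the $k-\lambda$ highest-voted competitors $C_{\mathrm{high}}$ and the remainder $C_{\mathrm{low}}$, whose top member is $h_M$ with vote $n_{h_M}^\gamma$. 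Under the condition $n_l^\gamma-n_{h_M}^\gamma+\mathbb{I}(l<h_M)>2M$, the Lipschitz bound gives that after the attack every edge of $S$ (vote $\ge n_l^\gamma-M$) outranks every edge of $C_{\mathrm{low}}$ (vote $\le n_{h_M}^\gamma+M$), with the index tie-break settling the boundary. I would finish by a case split on the final set $\hat{\mathcal{E}}_k$: if any $C_{\mathrm{low}}$ edge is selected then all $\lambda$ edges of $S$, which outrank it, are selected too; otherwise the only intruders come from $C_{\mathrm{high}}$, of which there are only $k-\lambda$, so at least $k-(k-\lambda)=\lambda$ explanatory edges survive. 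Either way $|\mathcal{E}_k\cap\hat{\mathcal{E}}_k|\ge\lambda$, and taking $M_h$ to be the largest $M$ satisfying the condition, together with the classification bound, gives $M^\ast$ as the minimum.

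The main obstacle is the explanation step rather than the classification step: I must make sure the pool $\bar{\mathcal{E}}_M\cup(\mathcal{E}\setminus\mathcal{E}_k)$ genuinely dominates every attack (in particular that the $M$ added edges can never do better than $\bar{\mathcal{E}}_M$, and that deleting non-explanatory edges only helps), and that the per-edge tie-breaking used in the single comparison $l$ versus $h_M$ correctly controls all pairwise comparisons between $S$ and $C_{\mathrm{low}}$. I would also flag that the argument relies on the stated attack model in which explanatory edges are not deleted, so that all $k$ edges of $\mathcal{E}_k$ remain present in $\hat G$ and are eligible to be counted as survivors.
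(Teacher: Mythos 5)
Your proposal is correct and follows essentially the same route as the paper's proof: invoke Theorem~\ref{thm:bounddiff} to bound each class count and each edge vote by $\pm M$, derive the classification margin condition $n_y-n_b+\mathbb{I}(y<b)>2M$, and derive the explanation condition by comparing the $\lambda$-th highest explanatory vote $n_l^\gamma$ against the $(k-\lambda+1)$-th highest competitor vote $n_{h_M}^\gamma$ over the pool $\bar{\mathcal{E}}_M\cup(\mathcal{E}\setminus\mathcal{E}_k)$, then take the minimum. Your explicit case split on whether a $C_{\mathrm{low}}$ edge enters $\hat{\mathcal{E}}_k$ is a slightly more careful rendering of the counting step that the paper states in one sentence, and the caveats you flag (worst-case additions being $\bar{\mathcal{E}}_M$, tie-breaking across all pairs) are exactly the points the paper also treats informally.
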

\begin{proof}
See Appendix \ref{app:thm:VerifyExp}. 
\vspace{-2mm}
\end{proof}

\emph{Remark:} We have the following remarks from Theorem~\ref{thm:VerifyExp}:
\begin{itemize}[leftmargin=*]
\vspace{-2mm}
\item Our voting classifier and voting explainer can tolerate  $M^*$\footnote{{In general, $M^* \leq M^*_\lambda$ in Def.~\ref{def:certxgnn}. 
We will leave it as future work to prove whether the derived $M^*$ is tight. 
}} perturbed edges. 

\item Our voting classifier can be applied for \emph{any} GNN classifier and our voting explainer for any GNN explainer that outputs edge importance score.

\item Our certified robustness guarantee is deterministic, i.e., it is true with a probability of 1.  

\end{itemize}

\section{Evaluation}
\label{sec:eval}
\vspace{-2mm}

\subsection{Experimental Setup}
\vspace{-2mm}

{\bf Datasets:} 
As suggested by~\citep{agarwal2023evaluating}, we choose datasets with groundtruth explanations for evaluation. 
We adopt the synthetic dataset "SG-Motif", where each graph has a label and ``Motif" is the groundtruth explanation that can be "House", "Diamond", and "Wheel".
We also adopt two real-world graph datasets (i.e., Benzene and FC) with groundtruth explanations from~\cite{agarwal2023evaluating}. 
Their dataset statistics are described in Table~\ref{tab:Datasets} in Appendix~\ref{app:evaluation}.
For each dataset, we randomly sample 70\% graphs for training, 10\% for validation, and use the remaining 20\% graphs for testing. 

{\bf GNN Explainer and Classifier:}
Recent works \citep{funke2022zorro,agarwal2023evaluating} show many GNN explainers (including the well-known GNNExplainer~\cite{GNNEx19}) are unstable, i.e., they yield significantly different explanation results under different runs. We also validate this and show results in Table~\ref{tab:instable} in Appendix. This makes it hard to evaluate the explanation results in a consistent or predictable way. To avoid the issue, we carefully select XGNN baselines with stable explanations: PGExplainer~\citep{DBLP:journals/corr/abs-2011-04573/PGExplainer}, Refine~\citep{wang2021towards}, and GSAT~\citep{DBLP:journals/corr/abs-2201-12987/GSAT}. We also select three well-known GNNs as the GNN classifier: GCN~\citep{kipf2017semi}, GSAGE~\citep{hamilton2017inductive}, and GIN~\citep{xu2018how}.  
We implement  these explainers and classifiers using their publicly available source code. 
Appendix~\ref{app:evaluation} details our training strategy to learn the voting explainer and voting classifier in {\name}. 

{\bf Evaluation Metrics:} We adopt three metrics for evaluation.  
\emph{1) Classification Accuracy}: fraction of testing graphs that are correctly classified, e.g., by our voting classifier; \emph{2) Explanation Accuracy}: fraction of explanatory edges outputted, e.g., by our voting  explainer, are in the groundtruth across all testing graphs; 
\emph{3) Certified Perturbation Size  $M^*$ at Certified Explanation Accuracy (or $\lambda$)}: 
Given a testing graph with groundtruth ($k$) explanatory edges, and a predefined $\lambda$ (or certified explanation accuracy $\lambda/k$), our theoretical result outputs (at least) $\lambda$ explanatory edges on the perturbed testing graph are from the groundtruth, where the testing graph allows arbitrary {$M^*$} perturbations.  
{$M^*$ vs  $\lambda$} then reports the average {$M^*$} of all testing graphs for the given $\lambda$. 

{\bf Parameter Setting:} 
There are several hyperparameters in our {\name}. Unless otherwise mentioned, we use GCN as the default GNN classifier and PGExplainer as the default GNN explainer. 
we use MD5 as the hash function $h$ and we set $\lambda = 3$, $p=0.3$, $T = 70$, $\gamma = 0.3$ and $k$ as Table~\ref{tab:Datasets}. We will also study the impact of these hyperparameters on our defense performance. 

\begin{table}[!t]
    \centering
    \renewcommand\arraystretch{1.15}
    \resizebox{\textwidth}{!}{
    \begin{tabular}{c|c|cccc|c|cccc|c|cccc}
		\toprule
      \multirow{2}*{\bf Datasets} & \multicolumn{5}{c|}{\bf PGExplainer} & \multicolumn{5}{c|}{\bf ReFine} &  \multicolumn{5}{c}{\bf GSAT}\\
         \cline{2-16}
         &\multirow{2}{*}{Orig.}&\multicolumn{4}{c|}{T}&\multirow{2}{*}{Orig.}&\multicolumn{4}{c|}{T}&\multirow{2}{*}{Orig.}&\multicolumn{4}{c}{T}\\
        \cline{3-6} \cline{8-11} \cline{13-16}
         & &30&50&70&90&  &30&50&70&90& &30&50&70&90\\
         \Xhline{0.7px}
         SG+House&0.740&0.658&0.725&{0.725}&0.673&0.707&0.588& {0.690}& 0.593&0.564&0.744& {0.759}&0.716&0.673&0.658\\
         \cline{1-16}
        SG+Diamond& 0.745&0.704& {0.730}&0.729& 0.620&0.569&0.440&0.499& 0.521&0.398&0.564&0.426&0.493&0.558&0.420\\
          \cline{1-16}
        SG+Wheel&0.629&0.587&0.612&0.571& 0.542&0.604&0.614& {0.626}& 0.606&0.462&0.568&0.491&0.544&0.612&0.562\\
         \cline{1-16}
         {Benzene}&0.552&0.421&0.497&0.468& 0.429&0.559&0.463& 0.474&  {0.512}&0.440&0.552&0.314&0.430&0.445&0.398\\
         \cline{1-16}
        {FC}&0.531&0.385&0.452&0.373& 0.328&0.503&0.369&0.447& 0.425&0.314&0.487&0.350&0.392&0.412& 0.373\\
        \bottomrule
    \end{tabular}}
    \vspace{-2mm}
    \caption{Explanation accuracy on the original GNN explainers and our {\name}.}
   \vspace{-2mm}
   \label{tab:Explanation}
\end{table}

\begin{table}[!t]
    \footnotesize
        \centering 
    \renewcommand\arraystretch{0.9}
     \resizebox{\textwidth}{!}{\begin{tabular}{c|c|cccc|c|cccc|c|cccc}
     \toprule
       \multirow{2}{*}{\bf Datasets}   & 
       \multicolumn{5}{c|}{\bf GCN}&  \multicolumn{5}{c|}{\bf GIN}&  \multicolumn{5}{c}{\bf GSAGE} \\
         \cline{2-16} 
        & \multirow{2}{*}{Orig.} & \multicolumn{4}{c|}{T}&\multirow{2}{*}{Orig.} & \multicolumn{4}{c|}{T}&\multirow{2}{*}{Orig.} & \multicolumn{4}{c}{T}\\
        \cline{3-6} \cline{8-11} \cline{13-16}
         & &30&50&70&90& &30&50&70&90& &30&50&70&90\\
         \Xhline{0.7px}
         SG+House& 0.920&0.895&0.905&0.905&0.890&0.945&0.915&0.915&0.900&0.905&0.930&0.900&0.890&0.895&0.875\\
         \cline{1-16}
        SG+Diamond& 0.965&0.935&0.935&0.935&0.930& 0.975&0.935&0.955&0.955&0.955& 0.965&0.940&0.940&0.940&0.940\\
         \cline{1-16}
        SG+Wheel& 0.915
&0.905&0.905&0.900&0.885& 0.930
&0.915&0.905&0.900&0.895& 0.920
&0.910&0.910&0.895&0.890\\
         \cline{1-16}
         Benzene& 0.758&0.746&0.700&0.723&0.707& 0.792&0.736&0.754&0.754&0.754& 0.773&0.725&0.760&0.718&0.718\\
         \cline{1-16}
         FC&0.711&0.674&0.692&0.692&0.631&0.800&0.662&0.714&0.714&0.703&0.723&0.692&0.692&0.692&0.620\\
       \bottomrule
    \end{tabular}}
    \vspace{-2mm}
    \caption{Prediction accuracy on the original GNN classifiers and our {\name}.}
    \vspace{-2mm}
    \label{tab:Prediction}
\end{table}

\begin{table}[!t]
    \tiny
\begin{minipage}[c]{\textwidth}
    \centering
    \resizebox{\textwidth}{!}{
        \begin{tabular}{c|cccc|ccc|ccc}
		\toprule
      \multirow{2}*{\bf Datasets}  &\multicolumn{4}{c|}{$p$}&\multicolumn{3}{c|}{$\gamma$}&\multicolumn{3}{c}{$h$}\\
         \cline{2-11}
         &0.0&0.2&0.3&0.4&0.2&0.3&0.4&{\fontsize{5}{5}\selectfont MD5}&{\fontsize{5}{5}\selectfont SHA1}&{\fontsize{4.5}{5}\selectfont SHA256}\\
         \Xhline{0.7px}
         SG+House&0.053&0.695&0.725&0.710& 0.715&0.725&0.720& 0.725&0.718&0.710\\
         \cline{1-11}
        SG+Diamond& 0.045&0.620&0.729&0.720& 0.712& 0.729&0.718&0.729&0.729&0.721\\
          \cline{1-11}
        SG+Wheel& 0.042&0.511&0.571&0.508& 0.550&0.571&0.564& 0.571&0.565&0.562\\
         \cline{1-11}
         {Benzene}&0.102&0.433&0.468&0.403& 0.440& 0.468&0.452&0.468&0.472&0.468\\
         \cline{1-11}
         {FC}&0.096&0.353&0.373&0.288& 0.345&0.373&0.385& 0.373& 0.382&0.390\\
        \bottomrule
    \end{tabular}
    }
   \vspace{-2mm}
    \caption{Explanation accuracy of our {\name} under different $p$, $\gamma$, and the hash function $h$}
    \label{tab:ablation}
\end{minipage}
\vspace{-4mm}
\end{table}

\subsection{Evaluation Results}
We first show the explanation accuracy and classification accuracy of {\name} under no attack, to validate it can behave similarly to the conventional GNN classifier and GNN explainer. We then show the guaranteed robustness performance of our {\name} against the graph perturbation attack.

\subsubsection{Explanation Accuracy and Classification Accuracy}
\label{sec:res_emp}
{\bf {\name} maintains the explanation accuracy and classification accuracy on the original GNN explainers  and GNN classifiers:}
Table~\ref{tab:Explanation} shows the explanation accuracy of our {\name} and the original GNN explainers for reference.  
We can observe that {\name} can achieve close explanation accuracies (with a suitable number of subgraph $T$) as the original GNN explainers (which have   different explanation accuracies, due to their different explanation mechanisms). 
This shows the potential of {\name}
as an ensemble based XGNN.   
We also show the classification performance of our voting classifier in {\name} in Table~\ref{tab:Prediction} and  
the original GNNs  classifier for reference. Similarly, we can see our voting classifier learnt based on our training strategy can reach close classification accuracy as the original GNN classifiers.

{\bf Impact of hyperparameters in {\name}:} Next, we will explore the impact of important hyperparameters that  could affect the performance of {\name}. 

\emph{Impact of $T$:} 
Table~\ref{tab:Explanation} shows the 
explanation accuracy of {\name} with different $T$. 
We can see the performance depends on $T$ and the best $T$ in different datasets is different (often not the largest or smallest $T$). Note that the generated hybrid subgraphs use nonexistent edges from the complete graph. If $T$ is too small,  a hybrid subgraph   contains more nonexistent edges, which could exceed the tolerance of the voting explainer. In contrast, a too large $T$ yields very sparse subgraphs, making the useful information in the subgraph that can be used by the explainer be insufficient. This thus makes it hard to ensure explanatory edges have higher important scores than non-explanatory edges. 

\emph{Impact of $p$:} Table~\ref{tab:ablation} shows the explanation accuracy with different $p$, the fraction of the subgraphs generated by the complete graph that are combined with the clean graphs' subgraphs. 
We have a similar observation that a too small or too large $p$ would degrade the explanation performance, with $p=0.3$ obtaining the best performance overall. 
Note that when $p=0$, we only use the information of the original graphs, and the explanation performance is extremely bad. That means it is almost impossible to obtain the groundtruth explanatory edges. 
This thus inspires us to reasonably leverage extra information not in the original graph to guide finding the groundtruth explanatory edges. 

\emph{Impact of $\gamma$:} Table~\ref{tab:ablation} also shows the explanation accuracy with different $\gamma$, the fraction of the edges with the largest scores used for the voting explainer. We  can observe the results are relatively stable in the range $\gamma=[0.2,0.4]$. This is possibly due to that important edges in the original graph are mostly within these edges with the largest scores. 

\emph{Impact of $h$:} The explanation accuracy with different hash functions $h$ are shown in Table~\ref{tab:ablation}. We see the results are insensitive to $h$, suggesting we can simply choose the most efficient one in practice. 

\begin{figure*}[!t]
	\centering
 \subfloat[{SG+House}]
	{\centering\includegraphics[scale=0.225]{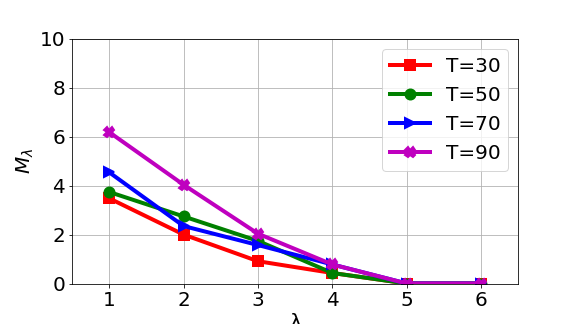}}
 \,
	\subfloat[{SG+Diamond}]
	{\centering \includegraphics[scale=0.22]{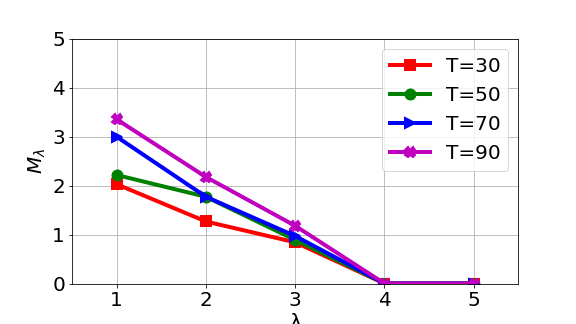}}
\,
\subfloat[{Beneze}]
	{\centering\includegraphics[scale=0.22]{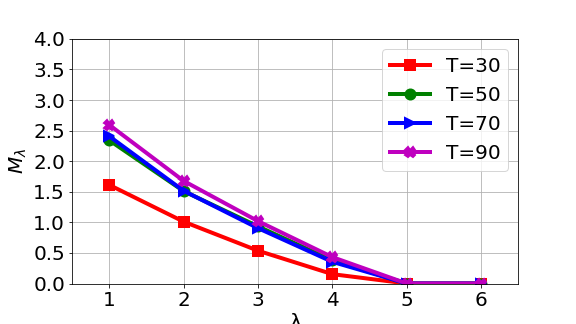}}
	\caption{Certified perturbation size over all testing graphs vs. $\lambda$ on PGExplainer. The maximum $\lambda$ in x-axis equals to $k$, the number of edges in the groundtruth explanation.} 
	\label{fig:CEA_T_pge}
	 \vspace{-8mm}
\end{figure*}

\begin{figure*}[!t]
	\centering
 \subfloat[{SG+House}]
	{\centering\includegraphics[scale=0.22]{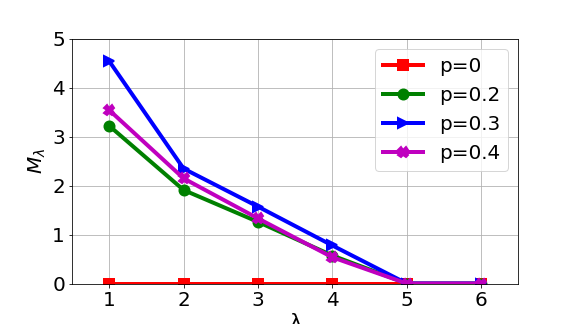}}
 \,
	\subfloat[{SG+Diamond}]
	{\centering \includegraphics[scale=0.22]{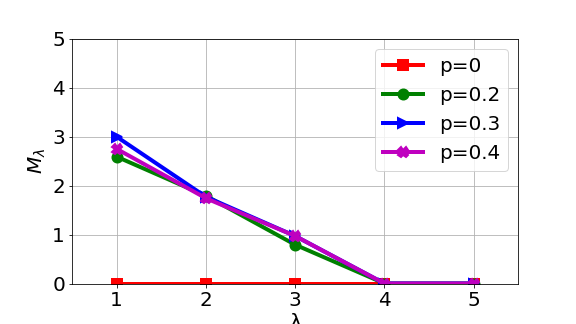}}
\,
 \subfloat[{Beneze}]
 	{\centering\includegraphics[scale=0.22]{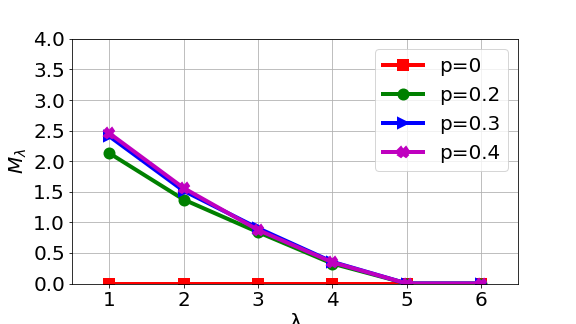}}
	\caption{Certified perturbation size over all testing graphs  vs. $p$ on PGExplainer.} 
	\vspace{-8mm}
	\label{fig:CEA_p_PG}
\end{figure*}

\begin{figure*}[!t]
	\centering
 \subfloat[{SG+House}]
	{\centering\includegraphics[scale=0.22]{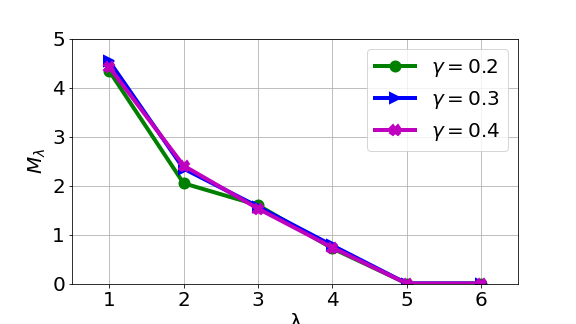}}
 \,
	\subfloat[{SG+Diamond}]
	{\centering \includegraphics[scale=0.22]{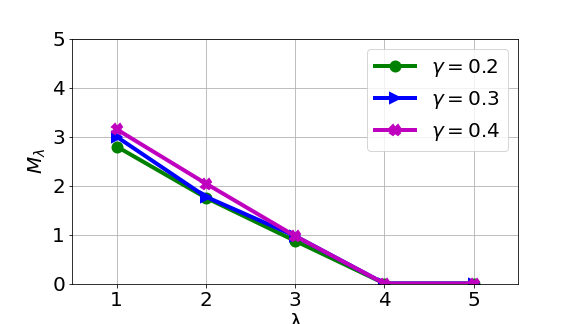}}
\,
 \subfloat[{Beneze}]
 	{\centering\includegraphics[scale=0.22]{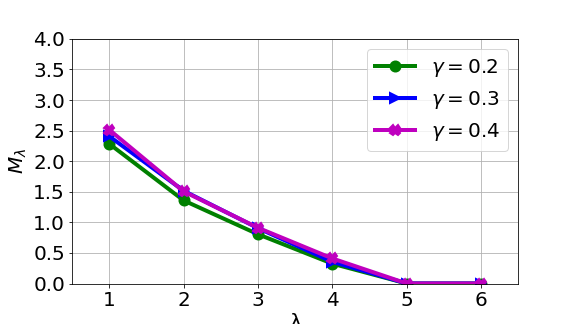}}
	\caption{Certified perturbation size over all testing graphs vs. $\gamma$ on PGExplainer.} 
	\label{fig:CEA_gamma_PG}
	 \vspace{-4mm}
\end{figure*}

\subsubsection{Certified Explanation Accuracy vs. Certified Perturbation Size}
The certified robustness results are shown in Figures~\ref{fig:CEA_T_pge}-\ref{fig:CEA_h_PG}. Due to limited space, we only show results on three datasets and put results on the other datasets and impact of hyperparameters in Appendix~\ref{app:evaluation}. 

{\bf Impact of $T$:} 
Figure~\ref{fig:CEA_T_pge} and Figures~\ref{fig:CEA_T_pge_2}-\ref{fig:CEA_T_gsat_2} in Appendix~\ref{app:moreresults} show the (average) maximum certified perturbation size vs $\lambda$ with different $T$. 
First, XGNNCert obtains reasonable certified explanation accuracy ($\lambda / k$) against the worst-case graph perturbation, when the \#perturbed edges is bounded by ${M^*}$. 
E.g., with average 6.2 edges are arbitrarily perturbed in SG+House, XGNNCert guarantees at least $\lambda=2$ edges are from the $k=5$ groundtruth explanatory edges. 
Second, there exists a tradeoff between the clean explanation accuracy and robust explanation accuracy. Specifically, as $T$ grows, the derived 
 certified perturbation size increases in general. This means that a larger number of generated subgraphs can enlarge the gap between the largest and  second-largest votes in ${\bf n}^{\gamma}$. On the other hand, the explanation accuracy (under no attack) can be decreased as shown in Section~\ref{sec:res_emp}.

{\bf Impact of $p$:} The results are in Figure~\ref{fig:CEA_p_PG} and Figure~\ref{fig:CEA_p_PG-2} in Appendix~\ref{app:moreresults}. First, we  observe the certified perturbation size is 0 when $p=0$. This means, without using information in the complete graph, it is impossible to provably defend against the graph perturbation attack.  Second, the certified explanation accuracies are close when $p$ is within the range $[0.2,0.4]$ (which is different from the conclusions on explanation accuracy without attack). This implies, for each clean graph, we can use 20\%-40\% of the subgraphs generated by the complete graph for achieving stable certified explanation accuracy.    

{\bf Impact of $\gamma$:} The results are shown in Figure~\ref{fig:CEA_gamma_PG} and Figure~\ref{fig:CEA_gamma_PG-2} in Appendix~\ref{app:moreresults}. 
Similarly, the certified results are relatively stable 
in the range $\gamma = [0.2, 0.4]$. The key reason could be that important edges in the original graph are mostly within the edges in these range. 

{\bf Impact of $h$:} The results are shown in Figure~\ref{fig:CEA_h_PG}. Like results on explanation accuracy, we can see the hash function $h$ almost does not affect the certified explanation accuracy. Again, this suggests we can choose the most efficient one in practice. 

\begin{table}[!t]
    \centering
    \small
    \begin{tabular}{c|c|c|c|c|c|c}
         \Xhline{0.8pt} 
      \multicolumn{2}{c|}{\bf Datasets} & {SG+House}&SG+Diamond&SG+Wheel&Benzene&FC\\
         \Xhline{0.8pt} 
        \multirow{2}{*}{\bf Exp. Acc.}&{\bf V-InfoR}&0.693&0.419 &0.439 &0.345 &0.217\\
         \cline{2-7}
         &{\bf {\name}}&0.740&0.729&0.571& 0.468&0.403 \\
         
         \Xhline{0.8pt} 
        {\bf Difference}
        &{\bf V-InfoR}&48.39\% & 73.07\% & 65.35\% &83.82\% &63.22\%\\
         \cline{2-7}
         
         {\bf Fraction}&{\bf {\name}}&7.44\% & 0.0\% & 4.20\% & 1.44\% & 1.41\%\\
         \Xhline{0.8pt} 
    \end{tabular}
    \caption{Explanation accuracy and the fraction of different edges under attack in \citet{li2024graph}.}
    \label{tab:Effectiveness}
    \vspace{-4mm}
\end{table}

{
\subsubsection{Defense Effectiveness Against Adversarial Attack on XGNN}
We further test {\name} in the default setting against the recent adversarial attack on XGNN~\citep{li2024graph}, and compare with the  state-of-the-art empirical defense V-InfoR~\citep{wang2023vinfor}. We evaluate their effectiveness by allowing the attacker to change two non-explanatory edges in the graph and taking the fraction of different explanatory edges (before and after the attack) as the metric. The test results are shown in Table~\ref{tab:Effectiveness}. We can observe that: Our {\name} not only achieves the theoretical defense performance and higher explanation accuracy, but also shows much better empirical defense performance than V-InfoR under the powerful attack. This is possibly due to our subgraph division and voting scheme design, which is ``inherently" robust to the strongest attack---it dilutes the adversarial perturbation effect into subgraphs, and at the same time, the number of subgraphs that are affected can be bounded. In contrast, V-InfoR is an empirical defense that constrains the attack capability and is unable to defend against the strong attack. 

\subsubsection{Complexity Analysis of {\name}}
Our {\name} divides each hybrid graph into $T$ subgraphs and applies a base GNN explainer to explain each subgraph. The final explanation is obtained via voting the explanation results of the $T$ subgraphs, whose computational complexity is negligible. Hence, the dominant computational complexity of {\name} is $T$ times of the base GNN explainer’s. For instance, PGExplainer has a complexity of $O(S|V|+|E|)$, where $S$ is the number of optimization steps, and $|V|$ and $|E|$ are the number of nodes and edges, respectively. Therefore, {\name} with PGExplainer as the base explainer has complexity $O(TS|V|+|E|)$. 
Note that the explanation on $T$ subgraphs can be run in parallel, as they are independent of each other. 
{Furthermore, each hybrid subgraph needs to store $p|V|^2$ more edges from the complete graph, where an edge is represented as a pair of node indexes in the implementation. Hence, the extra memory cost per graph is $O(pT|V|^2))$.}
We highlight that the extra computation and memory cost is to ensure the robustness guarantee. In other words, our {\name} obtains a robustness-efficiency tradeoff.
}

\section{Related Work}
\label{sec:related}

\vspace{-2mm}
{\bf Explainable GNNs:} XNNGs can be classified into \textit{decomposition-based}, \textit{gradient-based},  \textit{surrogate-based}, \textit{generation-based},  \textit{perturbation-based}, and \textit{causality-based} methods. \textit{Decomposition-based methods}~\citep{schnake2021higher,feng2023degree} treat the GNN prediction  as a score and decompose it backward layer-by-layer until reaching the input. The score of different parts of the input indicates the importance to the prediction. 
\textit{Gradient-based methods}~\citep{baldassarre2019explainability,pope2019explainability} 
take the gradient (implies sensitivity) of the prediction wrt. the input graph, and the sensitivity is used to explain the graph for that prediction.
\textit{Surrogate-based methods}~\citep{vu2020pgm,pereira2023distill} replace the complex GNN model with a simple and interpretable surrogate model.
\textit{Generation-based methods} 
~\citep{GEM,sui2022causal,shan2021reinforcement/RGExplainer,Wang_2023/RCExplainer}
use generative models to generate explanations. 
E.g., RCExplainer~\citep{Wang_2023/RCExplainer} 
applies reinforcement learning to generate subgraphs as explanations. 
\textit{Perturbation-based methods}~\citep{GNNEx19,DBLP:journals/corr/abs-2011-04573/PGExplainer,wang2021towards,funke2022zorro}
uncover the important subgraph as explanations by perturbing the input graph. 
\emph{Causality-based methods} \citep{behnam2024graph} explicitly build the structural causal model for a graph, based on the common assumption that a graph often consists of a underlying causal  subgraph. It then adopts the trainable neural causal model \citep{xia2021causal} to learn the cause-effect among nodes for causal explanation.

{\bf Adversarial attacks on GNN classifiers and explainers:}
Almost all existing method  focus on attacking GNN classifiers. 
 They are classified as test-time attacks
\citep{dai2018adversarial,zugner2018adversarial,ma2020towards,mu2021hard,wang2022bandits,wang2023turning,wang2024efficient} and training-time attacks
\citep{xu2019topology,zugner2019adversarial,wang2019attacking,zhang2021backdoor,wang2023turning}. 
Test-time attacks carefully perturb test graphs  
so that as many as them are misclassified by a pretrained GNN classifier, 
while training-time attacks carefully perturb training graphs during training, such that the learnt GNN classifier mispredicts as many test graphs as possible. 
\citep{li2024graph} is the only method on attacking GNN explainers.  
It is a black-box attack (i.e., attacker has no knowledge about XGNN) that aims to corrupt GNN explanations while preserving GNN predictions.

{\bf Certified defenses for GNN classifiers with probabilistic guarantees:} Existing certified defenses~\citep{bojchevski2020efficient,wang2021certified,zhang2021backdoor} are for GNN classifiers--they guarantee the same predicted label for a testing graph with arbitrary graph perturbation. 
For instance,
\citet{wang2021certified}  generalized randomized smoothing~\citep{lecuyer2019certified,cohen2019certified,hong2022unicr} from the continuous domain to the discrete graph domain.~\citet{zhang2021backdoor} extended randomized ablation~\citep{levine2020robustness} to build provably robust graph classifiers. 
However, these defenses only provide probabilistic guarantees and cannot be applied to XGNNs.

{\bf Majority voting-based certified defenses with deterministic guarantees:} 
This strategy has been widely used for classification models against adversarial tabular data~\citep{hammoudeh2023feature},
adversarial 3D points~\citep{zhang2023pointcert}, adversarial patches~\citep{levine2020randomized,xiang2021patchguard}, adversarial graphs~\citep{xia2024gnncert,yang2024distributed,li2025agnncert}, and {data poisoning attacks~\citep{levine2020deep,jia2021intrinsic,wang2022improved,jia2022certified}}. 
Their key differences are creating problem-specific voters for majority voting. 
 {However, these defenses  cannot be applied to 
robustify GNN explainers, which are drastically different from classification models.} 

{\bf Certified defenses of explainable non-graph models.} A few works~\citep{levine2019certifiably,liu2022certifiably,tan2023robust} propose to provably robustify explainable non-graph (image) models against adversarial perturbations. These methods mainly leverage the idea of randomized smoothing \citep{lecuyer2019certified,cohen2019certified} and only provide probabilistic certificates.

\vspace{-2mm}
\section{Conclusion}
\label{sec:conclusion}

\vspace{-2mm}
We propose the first provably robust XGNN (\emph{\name}) against graph perturbation attacks. {\name} first generates multiple hybrid subgraphs for a given graph (via hash mapping) such that only a bounded number of these subgraphs can be affected when the graph is adversarially perturbed. 
We then build a robust voting classifier and a robust voting explainer to aggregate the prediction and explanation results on the hybrid subgraphs. 
Finally, we can derive the robustness guarantee based on the built voting classifier and voting explainer against worst-case graph perturbation attacks  with bounded perturbations. Experimental results on multiple datasets and GNN classifiers/explainers validate the effectiveness of our {\name}.   
{In future work, we will enhance the certified robustness with better subgraph generation strategies and design node permutation invariant certified defenses.}

\subsubsection*{Acknowledgments}
We thank the anonymous reviewers for their insightful reviews. This work was partially supported by the Cisco Research Award and the National Science Foundation under grant Nos. ECCS-2216926, CCF-2331302, CNS-2241713 and CNS-2339686.

\bibliography{iclr2025_conference}
\bibliographystyle{iclr2025_conference}

\appendix

\section{Proofs}

\subsection{Proof of Theorem~\ref{thm:bounddiff}}
\label{app:thm:bounddiff}

When an edge $e$ is added to or deleted from $G$, only the subgraph $G^{i_e} = (\mathcal{V}, \mathcal{E}_{i_{e}})$ is corrupted after hash mapping, and all the other subgraphs $\{G^{j}\}_{j\neq i_e}$ are unaffected. Note that the complete graph $G_C$ is fixed and all subgraphs built from it are never affected. 
Then, with Equation (\ref{eq:hybrid}), only the hybrid subgraph $G_H^{i_{e}}$ would be corrupted. 
Further, when $M$ edges from $G$ are perturbed to form $\hat{G}$, only hybrid subgraphs containing these edges would be corrupted. As some edges may be mapped to the same group index, the different subgraphs between $\{ G_H^i\}$ and $\{ \hat{G}_H^i\}$ is at most $M$.

\subsection{Proof of Theorem~\ref{thm:VerifyExp}}
\label{app:thm:VerifyExp}

After the graph perturbation, we want to satisfy two requirements:   
(1) the voting classifier still predicts 
the class $y$ in the perturbed graph $\hat{G}$  with more votes than predicting any other class $\forall b\in \mathcal{C}\setminus\{y\}$; (2)
the voting explainer ensures at least $\lambda$ edges in $\mathcal{E}_k$ are still in $\hat{\mathcal{E}}_k$, or at most $(k-\lambda)$ edges in ${\mathcal{E}}_C \setminus\mathcal{E}_k$ 
have higher votes than the minimum votes of the edges in ${\mathcal{E}}_k$.

We first achieve (1):  Based on Theorem~\ref{thm:bounddiff}, with any $M$ perturbations on a graph $G$, at most $M$ hybrid subgraphs from $\{G_H^i\}$ can be corrupted. 
Hence, it decreases the largest votes $n_y$ at most $M$, while increasing the second-largest votes $n_{b}$ at most $M$ based on Eqn (\ref{eqn:labelcount}). 
Let $\hat{n}_y$ and $\hat{n}_b$ denote the votes of predicting the label $y$ and $b$ on the perturbed graph $\hat{G}$. We have 
$\hat{n}_y \geq n_y - M$ and $\hat{n}_b \leq n_b + M$. 
To ensure the voting classifier $\bar{f}$ still predicts $y$ for the perturbed graph $\hat{G}$, we require $\hat{n}_y > \hat{n}_b - \mathbb{I}(y<b)$ or $\hat{n}_y \geq \hat{n}_b - \mathbb{I}(y<b) + 1$, where $\mathbb{I}(y<b)$ is due to the tie breaking mechanism (i.e., we  choose a label with a smaller number when ties exist). Combining these inequalities, 
we require $n_y-M \geq n_b+M - \mathbb{I}(y<b) + 1$, yielding
\begin{align}
\label{eqn:reqclassifier}
M \leq \lfloor \frac{n_y-n_{b} + \mathbb{I}(y<b)-1}{2} \rfloor.
\end{align}

We now achieve (2): Recall $\bar{\mathcal{E}}_M $ the edges  in $\bar{\mathcal{E}}$ with top-$M$ scores in $\textbf{n}^{\gamma}$, which $\bar{\mathcal{E}}$ are edges in the complement of graph $G$. 
Similarly, with $M$ perturbed edges, the votes of every explanatory edge $e \in \mathcal{E}_k$ is decreased at most $M$, while the votes of every other edge $e \in \mathcal{E}_C \setminus \mathcal{E}_k$ is increased at most $M$ based on Eqn (\ref{eqn:votes}). 
Note that the edge $l \in \mathcal{E}_k$ has the $\lambda$-th highest votes $n_{l}^{\gamma}$, and the edge $h_M \in \bar{\mathcal{E}}_M \cup (\mathcal{E} \setminus \mathcal{E}_k)$ has the $(k-\lambda+1)$-th highest votes $n_{h_M}^{\gamma}$. 
Let $\hat{n}_{l}^{\gamma}$ and $\hat{n}_{h_M}^{\gamma}$ denote the votes of the edge $l$ and $h_M$ on $\hat{G}$ for each $M$. 
Likewise, we have $\hat{n}_{l}^{\gamma} \geq n_{l}^{\gamma} - M$ and $\hat{n}_{h_M}^{\gamma}  \leq {n}_{h_M}^{\gamma} + M$ for every $h_M$ (note $h_M$ depends on $M$). 
If the smallest votes $\hat{n}_{l}^{\gamma}$ of edge $l$ after the perturbation is still larger than the largest votes $\hat{n}_{h_M}^{\gamma}$ of the edge $h_M$, then at least $\lambda$ edges in $\mathcal{E}_k$ are still in $\hat{\mathcal{E}}_k$. 
This requires: $\hat{n}_{l}^{\gamma} > \hat{n}_{h_M}^{\gamma} - \mathbb{I}(l<h_M))$ for all $h_M$, where $\mathbb{I}(l<h_M)$ is due to the tie breaking. 
Combining these inequalities together, we require ${n}_{l}^\gamma -M > {n}_{h_M}^\gamma +M - \mathbb{I}(l<h_M), \forall M$, yielding
\begin{align}
\label{eqn:reqxGNN}
{n}_{l}^\gamma- {n}_{h_M}^\gamma + \mathbb{I}(l<h_M) > 2M
\end{align}

By satisfying both requirements, we force 
$$M \leq \min \big( \lfloor \frac{n_y-n_{b} + \mathbb{I}(y<b)-1}{2} \rfloor, M_h\big),$$
where $M_h = \max \, M,  \quad {s.t.} \quad {n}_{l}^\gamma- {n}_{h_M}^\gamma + \mathbb{I}(l<h_M) > 2M.$

\begin{algorithm}[!t] 
\small
\caption{XGNNCert: Classification, Explanation, and Certified Perturbation Size}
\label{alg:algorithm1}
\textbf{Input}: Graph $G={(\mathcal{V},\mathcal{E})}$ with $k$  explanation edges,  base classifier $f$, base explainer $g$, 
hyperparameters: ratio $p$, ratio $\gamma$, number of subgraphs $T$, hash function $h$.

\textbf{Output}: Prediction $y$, explanation $\mathcal{E}_{k}$, certified perturbation size $\{{  M_\lambda},\lambda\in[1,k]\}$ for $G$

\begin{algorithmic}[1] 
\STATE Initialize $T$ subgraphs with empty edges $\{G^{i}=(\mathcal{V},\mathcal{E}^{i}=\emptyset),i=1,\cdots,T\}$.
\STATE Initialize $T$ complete subgraphs with empty edges $\{G^{i}_{C}=(\mathcal{V},\mathcal{E}^{i}_{C}=\emptyset),i=1,\cdots,T\}$.
\STATE Initialize $T$ hybrid subgraphs with empty edges $\{G^{i}_{H}=(\mathcal{V},\mathcal{E}^{i}_{H}=\emptyset),i=1,\cdots,T\}$.
\STATE Initialize a complete edge set $\mathcal{E}_{C} = \{(u,v),\forall u,v\in\mathcal{V}:u<v\}$
\STATE Initialize votes for all classes ${\bf n}= \{0\}^{|\mathcal{C}|}$, and all edges: ${\bf n}^{\gamma}= \{0\}^{|\mathcal{E}_{C}|}$

\FOR{Edge $e \in \mathcal{E}_{C}$}  
\STATE Assign index $i_e = h[\mathrm{str}(u) + \mathrm{str}(v)] \, \, \mathrm{mod} \, \, T+1.$
\IF{$e \in G$}
\STATE Add $e$ into subgraph 
$G^{i_{e}}$ by $\mathcal{E}^{i_{e}}\cup= \{e\}$
\ENDIF
\STATE Add $e$ into complete subgraph  $G^{i_{e}}_{C}$ by $\mathcal{E}^{i_{e}}_{C}\cup=\{e\}$
\ENDFOR

\FOR{$i \in [1,T]$}
\STATE Add the $i$-th subgraph $G^i$ into $i$-th hybrid subgraph by $\mathcal{E}_{H}^{i}\cup = \mathcal{E}^{i}$
\FOR{$j\in [1,i-1]\cup[i+1,T]$}
\STATE Randomlize a value $\Tilde{p}\in[0,1)$
\IF{$\Tilde{p}\leq p$}
\STATE Add the $j$-th complete subgraph into $i$-th hybrid subgraph by $\mathcal{E}_{H}^{i}\cup = \mathcal{E}^{i}_{C}$
\ENDIF
\ENDFOR
\ENDFOR
\FOR{$G^{i}_{C},i\in [1,T]$}
\STATE Predict $G^{i}_{C}$'s label via the base classifier: $c=f(G^{i}_{C})$
\STATE Add to the classification vote by 1: $n_c+=1$
\ENDFOR
\STATE Find the class with the most votes: $y = \underset{c \in \mathcal{C}}{{\arg\max}} \, n_c$
\STATE Find the class with the second most votes: $b = \underset{c \in \mathcal{C}\setminus\{y\}}{{\arg\max}} \, n_c$
\STATE Calculate the certified bound w.r.t. the classifier: $M_{f} = \lfloor \frac{n_y-n_{b} + \mathbb{I}(y<b)-1}{2} \rfloor$.
\FOR{$G^{i}_{C},i\in [1,T]$}
\STATE Explain $G^{i}_{C}$'s output via the base explainer: $\textbf{m}^{i}=g(G_H^{i},y)$
\FOR{$e\in G^i_{H}$}
\IF{${\bf m}_{e}^{i} \geq {\bf m}_{(\gamma)}^{i}$}
\STATE $n_{e}^{\gamma}+=1$
\ENDIF
\ENDFOR
\ENDFOR
\STATE Find the edges with top-k votes in $G$: $\mathcal{E}_{k}=\mathcal{E}.top_{k}(\textbf{n}^{\gamma})$
\STATE Initialize $M=0$, $\{{  M_\lambda}=0, \lambda=1,\cdots,k\}$ 
\WHILE{$M_{1}=M$}
\STATE $M+=1$
\STATE Find the edges with top-$M$ votes in $\mathcal{E}_{C}\setminus\mathcal{E}$: $\overline{\mathcal{E}}_{M}=(\mathcal{E}_{C}\setminus\mathcal{E}).top_{M}(\textbf{n}^{\gamma})$
\FOR{$\lambda \in [1,k]$}
\STATE Find the edge $l \in \mathcal{E}_k$ is with the $\lambda$-th highest votes $n_{l}^{\gamma}$, 
 \STATE Find the edge 
  $h \in \bar{\mathcal{E}}_{M} \cup (\mathcal{E} \setminus \mathcal{E}_k)$ 
 with the $(k-\lambda+1)$-th highest votes ${n}_{h}^{\gamma}$ in ${\textbf{n}}^{\gamma}$ 
\IF{${n}_{l}^\gamma- {n}_{h}^\gamma + \mathbb{I}(l<h) > 2M$}
\STATE ${  M_\lambda}=M$
\ENDIF
\ENDFOR
\ENDWHILE
\FOR{$\lambda\in [1,k]$}
\STATE ${  M_\lambda}=\text{min}({  M_\lambda},M_{f})$
\ENDFOR
\STATE {\bf Return $y, \mathcal{E}_K, \{{  M_\lambda}, \lambda \in [1,k]\}$} 
\end{algorithmic}
\end{algorithm}

\section{Pseudo Code on XGNNCert}

{Here we provide the pseudo code of our XGNNCert, shown in Algorithm~\ref{alg:algorithm1}}.

\section{Experimental Setup and More Results}
\label{app:evaluation}

{
\subsection{Detailed Experimental Setup}
\label{app:setup}

{\bf Dataset statistics:} Table~\ref{tab:Datasets} shows the statistics of the used datasets.

{\bf Hyperparameter and network architecture details in training GNN classifiers and explainers:}  We have tested the base GNN classifiers with 2 and 3 layers, the hidden neurons $\{32, 64, 128, 192\}$, learning rates $\{0.001, 0.002, 0.005, 0.01\}$ and training epochs $\{600, 800, 1000, 1200\}$ with the Adam optimizer (no weight decay in the training). Finally, our base GNN classifiers are all 3-layer architectures with 128 hidden neurons, the learning rate as 0.001, and the epochs as 1000. 

For base GNN explainers, we simply use the configured hyperparameters in their source code. We set their hidden sizes as 64, coefficient sizes as 0.0001, coefficient entropy weights as 0.001, learning rates as 0.01, and epochs as 20. For PG Explainer, we set its first temperature as 5 and last temperature as 2. For GSAT, we set its final rate as 0.7, decay interval as 10 and decay rate as 0.1. For Refine, we set its gamma parameter as 1, beta parameter as 1 and tau parameter as 0.1.

}

 {{\bf Training the GNN classifier and GNN explainer}: Traditionally, we only use the training graphs (with their labels) to train a GNN classifier, which is used to predict the testing graphs. In {\name}, however, the voting classifier uses the hybrid subgraphs (the combination of subgraphs from the testing graphs and from the corresponding complete graphs) 
for evaluation. To enhance the testing performance of our voting classifier, we propose to train  the GNN classifier using not only the original training graphs but also the hybrid subgraphs, whose labels are same as the training graphs'\footnote{We observe the wrong prediction rate on our test hybrid subgraphs is relatively high, if we use the GNN classifier trained only on the raw training graphs. For instance, when $T=30$, the wrong prediction rate could be range from 35\% to 65\% on the five datasets. This is because the training graphs used to train the GNN classifier and test hybrid subgraphs have drastically different distributions. 
}.

Instead, the GNN  explainer is directly trained on raw clean graphs due to two reasons. First, the cost of training the explainer on subgraphs is high; Second, some subgraphs do not contain groundtruth explanatory edges, making it unable to explain these subgraphs during training.    
}

\begin{table}[!t]
    \centering
    \scriptsize
    \begin{tabular}{ccccccccc}
		\toprule
         Dataset& Graphs&$|\mathcal{V}|_{\text{avg}}$&$|\mathcal{E}|_{\text{avg}}$&Features&GT Graphs&GT Explanation&$|\mathcal{E}_{\text{GT}}|_{\text{avg}}$&k\\
         \midrule
         
         SG+House&1,000&13.69&15.56& 10&693&House Shape&6&6\\
         SG+Diamond& 1,000&10.46&14.03&10&486&Diamond Shape&5&5\\
         SG+Wheel&1,000& 12.76&14.07 & 10&589&Wheel Shape&8&8\\
         \midrule
         Benzene& 12,000&20.58&43.65&14&6,001 &Benzene Ring&6&6\\
         Fluoride-Carbonyl (FC) & 8,6716&21.36&45.37&14& 3,254 &F- and C=O& 5 & 5 \\
         \bottomrule
    \end{tabular}
    \vspace{-2mm}
    \caption{Datasets and their statistics.}
    \vspace{-4mm}
    \label{tab:Datasets}
\end{table}

{\subsection{More Results}}
\label{app:moreresults}

{ 
Figure~\ref{fig:CEA_T_pge_2}---Figure~\ref{fig:CEA_T_gsat_2} show the certified perturbation size vs. $\lambda$ on the three GNN explainers. 

Figure~\ref{fig:CEA_p_PG-2}---Figure~\ref{fig:CEA_h_PG} show the certified perturbation size of {\name} vs. $p, \gamma, h$ on PGExplainer, respectively. Additionally, Table \ref{tab:frac_performance}  and Table \ref{tab:hash_performance}  show the prediction accuracy of {\name}  vs. $p$ and $h$, respectively. We see the results are close, implying {\name}  is insensitive to $p$ and $h$. 
}

Figure~\ref{fig:deceive} shows the explanation results when the GNN model is deceived. We see that explaining wrong predictions yields explanation results that are not meaningful.

\begin{table}[!t]
\centering
\begin{tabular}{lccccc}
\toprule
\textbf{Ratio $p$}   & \textbf{SG+House} & \textbf{SG+Diamond} & \textbf{SG+Wheel} & \textbf{Benzene} & \textbf{FC} \\ \midrule
0.0              & 0.900             & 0.925               & 0.905             & 0.723           & 0.674       \\
0.02             & 0.895             & 0.920               & 0.900             & 0.723           & 0.692       \\
0.03             & 0.905             & 0.935               & 0.900             & 0.723           & 0.692       \\
0.04             & 0.895             & 0.925               & 0.900             & 0.725           & 0.662       \\ \bottomrule
\end{tabular}
\caption{Prediction accuracy of XGNNCert 
 with different $\rho$ (default $p=0.3$).}
\label{tab:frac_performance}
\end{table}

\begin{table}[!t]
\centering
\begin{tabular}{lccccc}
\toprule
\textbf{Hash function $h$}   & \textbf{SG+House} & \textbf{SG+Diamond} & \textbf{SG+Wheel} & \textbf{Benzene} & \textbf{FC} \\ \midrule
\textbf{MD5}    & 0.905             & 0.935               & 0.900             & 0.723           & 0.692       \\

\textbf{SHA1}     & 0.905             & 0.935               & 0.895             & 0.718           & 0.692       \\
\textbf{SHA256}  & 0.900             & 0.935               & 0.905             & 0.725           & 0.674       \\ \bottomrule
\end{tabular}
\caption{Prediction accuracy of XGNNCert 
 with different hash functions. Default is "MDS".}
 \vspace{-5mm}
\label{tab:hash_performance}
\end{table}

\begin{figure*}[!t]
	\centering
 \subfloat[{SG+Wheel}]
	{\centering\includegraphics[scale=0.22]{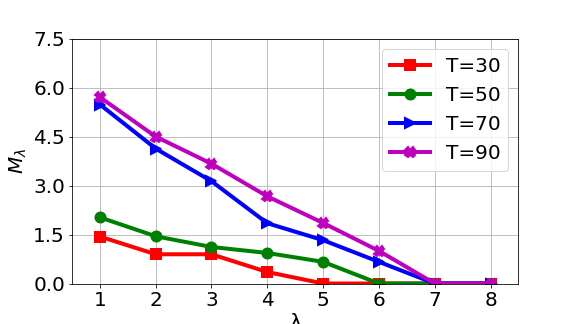}}
 \,
 \subfloat[{FC}]
	{\centering \includegraphics[scale=0.22]{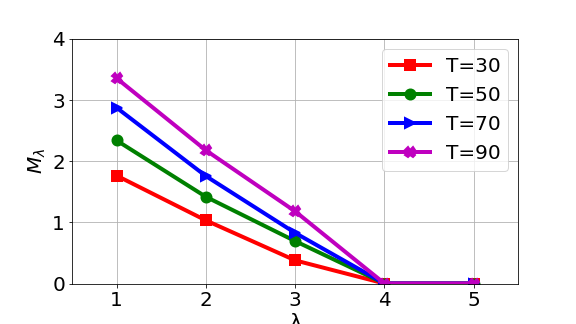}} 
	\caption{Certified perturbation size over all testing graphs  vs. $\lambda$ on PGExplainer.} 
	\label{fig:CEA_T_pge_2}
	 \vspace{-2mm}
\end{figure*}

\begin{figure*}[!t]
	\centering
 \subfloat[{SG+House}]
	{\centering\includegraphics[scale=0.22]{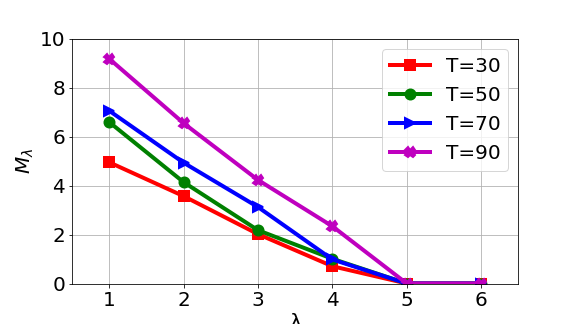}}
 \,
	\subfloat[{SG+Diamond}]
	{\centering \includegraphics[scale=0.22]{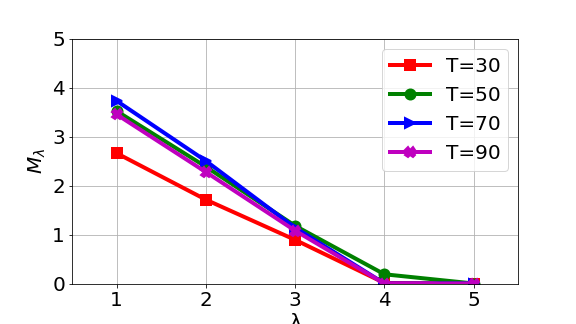}}
\,
 \subfloat[{SG+Wheel}]
	{\centering\includegraphics[scale=0.22]{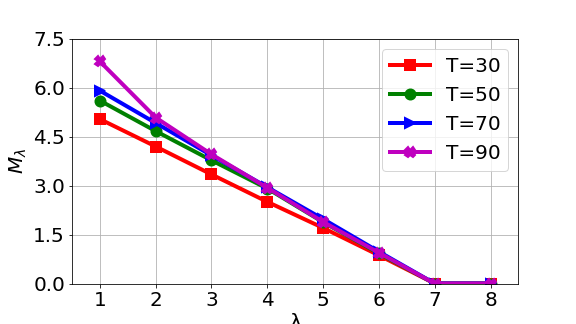}}
 \,
 \vspace{-4mm}
\\
 \subfloat[{Beneze}]
	{\centering\includegraphics[scale=0.22]{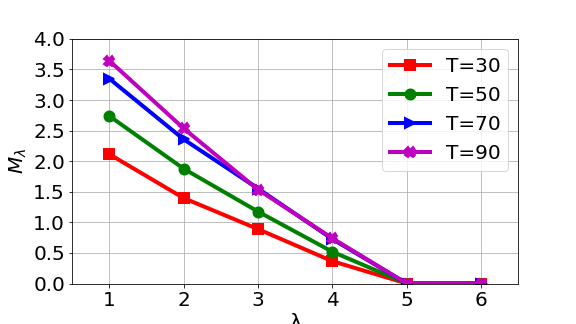}}
\,	
 \subfloat[{FC}]
	{\centering \includegraphics[scale=0.22]{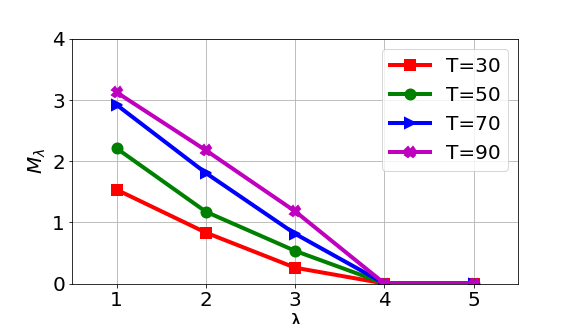}} 
	\caption{Certified perturbation size over all testing graphs vs. $\lambda$ on ReFine.} 
	\label{fig:CEA_T_refine-2}
	 \vspace{-2mm}
\end{figure*}

  \begin{figure*}[!t]
	\centering
 \subfloat[{SG+House}]
	{\centering\includegraphics[scale=0.22]{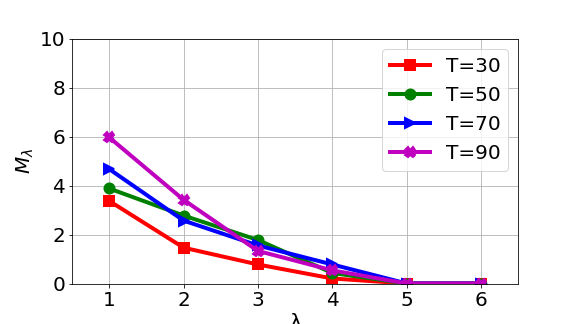}}
 \,
	\subfloat[{SG+Diamond}]
	{\centering \includegraphics[scale=0.22]{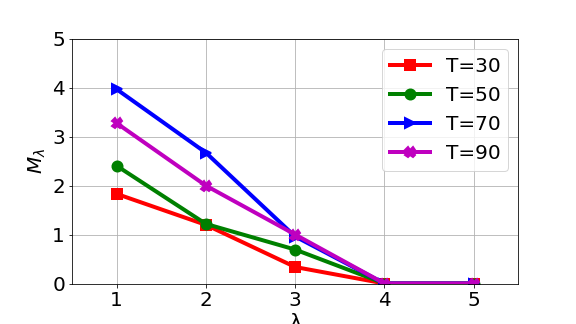}}
\,
 \subfloat[{SG+Wheel}]
	{\centering\includegraphics[scale=0.22]{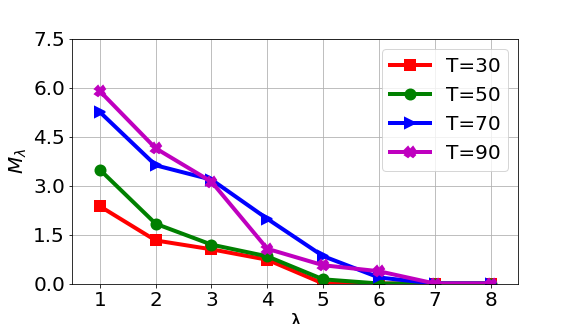}}
 \,
  \vspace{-4mm}
\\
 \subfloat[{Beneze}]
	{\centering\includegraphics[scale=0.22]{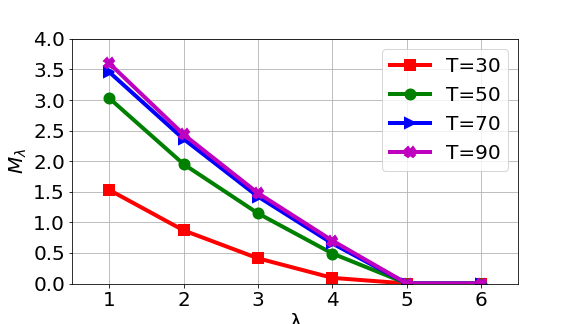}}
\,	
 \subfloat[{FC}]
	{\centering \includegraphics[scale=0.22]{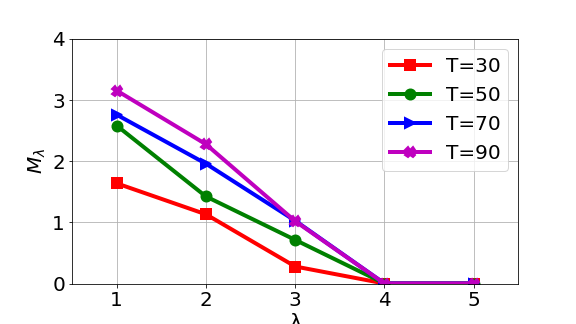}} 
	\caption{{  Maximal}
 certified perturbation size over all testing graphs vs. $\lambda$ on GSAT.} 
	\label{fig:CEA_T_gsat_2}
	 \vspace{-2mm}
\end{figure*}

\begin{figure*}[!t]
	\centering
 \subfloat[{SG+Wheel}]
	{\centering\includegraphics[scale=0.22]{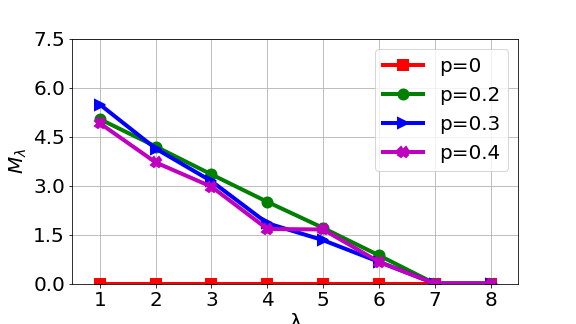}}
 \,
\subfloat[{FC}]
	{\centering \includegraphics[scale=0.22]{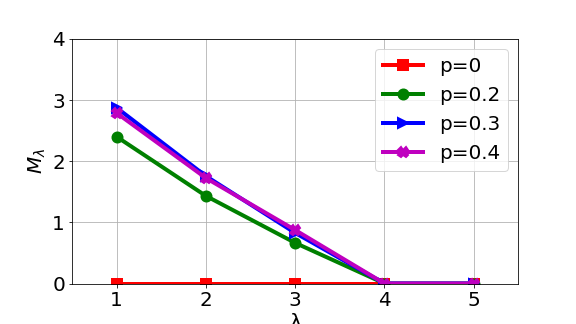}} 
	\caption{Certified perturbation size over all testing graphs  vs. $p$ on PGExplainer.} 
	\label{fig:CEA_p_PG-2}
	 \vspace{-6mm}
\end{figure*}

\begin{figure*}[!t]
	\centering
 \subfloat[{SG+Wheel}]
	{\centering\includegraphics[scale=0.22]{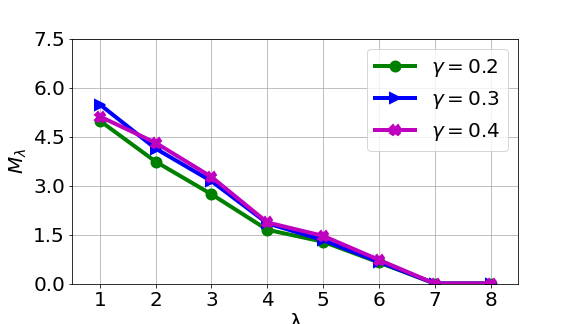}}
\,	
 \subfloat[{FC}]
	{\centering \includegraphics[scale=0.22]{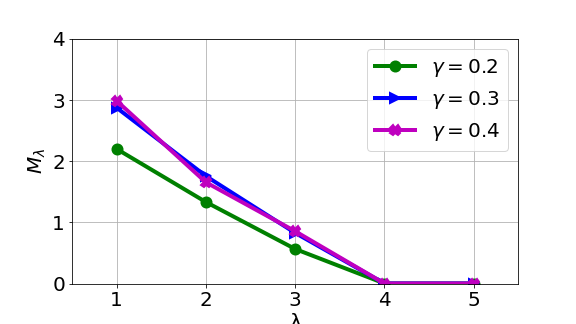}} 
	\caption{Certified perturbation size over all testing graphs vs. $\gamma$ on PGExplainer.} 
	\label{fig:CEA_gamma_PG-2}
	 \vspace{-6mm}
\end{figure*}

\begin{figure*}[!t]
	\centering
 \subfloat[{SG+House}]
	{\centering\includegraphics[scale=0.22]{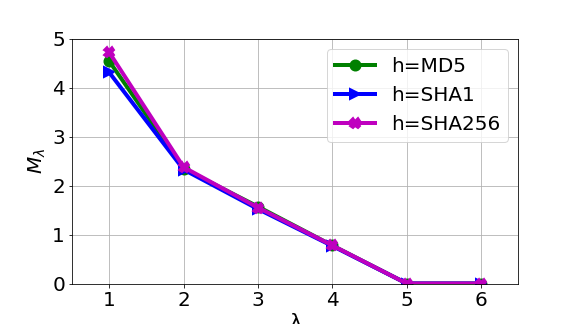}}
 \,
	\subfloat[{SG+Diamond}]
	{\centering \includegraphics[scale=0.22]{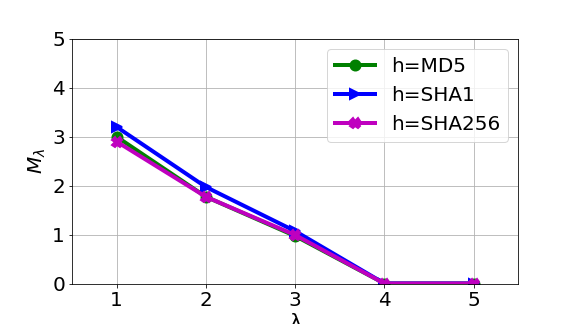}}
\,
 \subfloat[{SG+Wheel}]
	{\centering\includegraphics[scale=0.22]{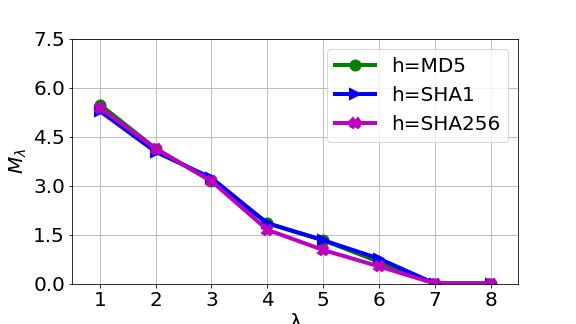}}
 \,
 \vspace{-4mm}
\\
 \subfloat[{Beneze}]
	{\centering\includegraphics[scale=0.22]{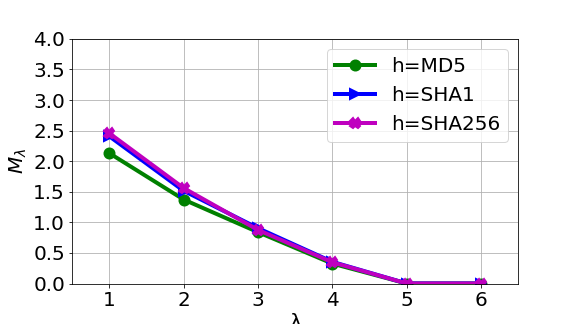}}
\,	
 \subfloat[{FC}]
	{\centering \includegraphics[scale=0.22]{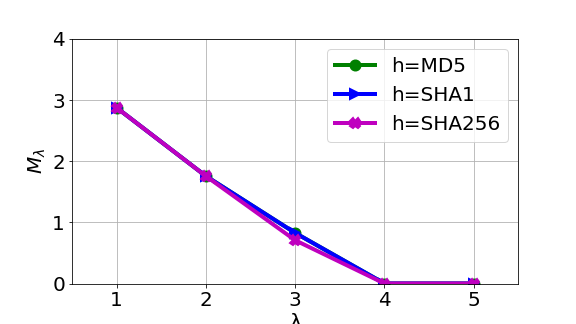}} 
	\caption{Certified perturbation size over all testing graphs vs. $h$ on PGExplainer.} 
	\label{fig:CEA_h_PG}
	 \vspace{-2mm}
\end{figure*}

\begin{figure*}[t]
	\centering
    \captionsetup[subfloat]{labelsep=none,format=plain,labelformat=empty,farskip=0pt}
\subfloat[{Ground-Truth}]{
 \subfloat[{House}]{
	{\centering\includegraphics[scale=0.16]{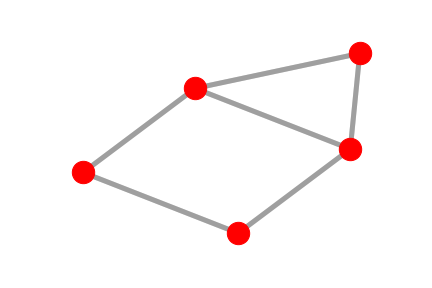}}}
 \,
 \subfloat[{Diamond}]{
	{\centering\includegraphics[scale=0.16]{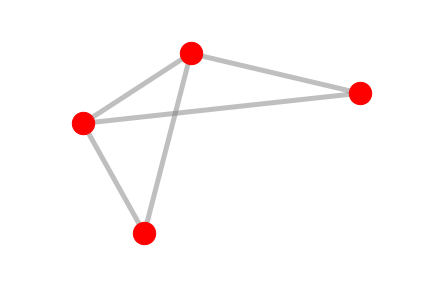}}}
 \,
 \subfloat[{Wheel}]{
	{\centering\includegraphics[scale=0.16]{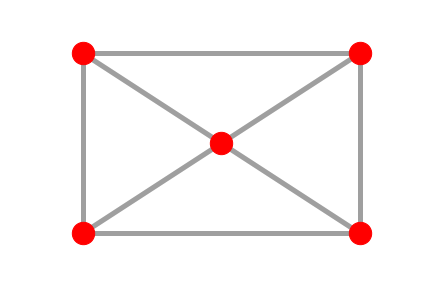}}}
 
 \,
 \subfloat[{Benzene}]{
	{\centering\includegraphics[scale=0.16]{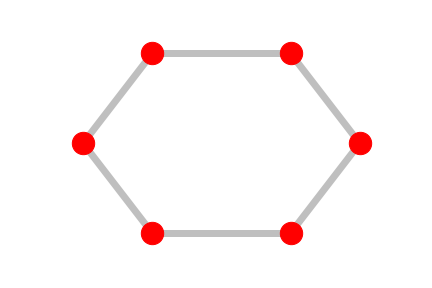}}}
 
 \,
 \subfloat[{FC}]{
	{\centering\includegraphics[scale=0.16]{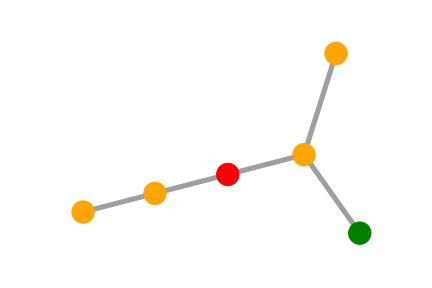}}}
 }
 \\
\subfloat[{Original Explanation}]{
 \subfloat[{House}]{
	{\centering\includegraphics[scale=0.16]{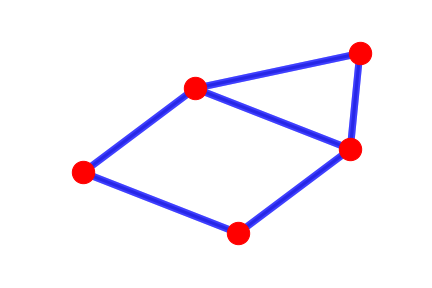}}}
 \,
 \subfloat[{Diamond}]{
	{\centering\includegraphics[scale=0.16]{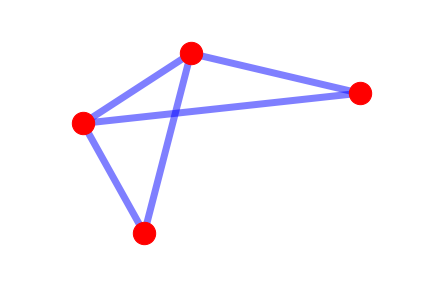}}}
 \,
 \subfloat[{Wheel}]{
	{\centering\includegraphics[scale=0.16]{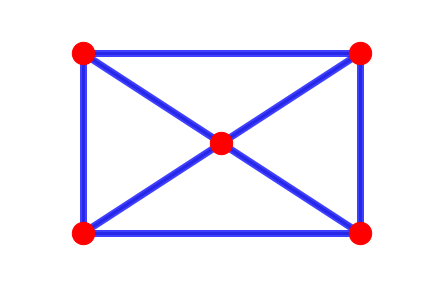}}}
 
 \,
 \subfloat[{Benzene}]{
	{\centering\includegraphics[scale=0.16]{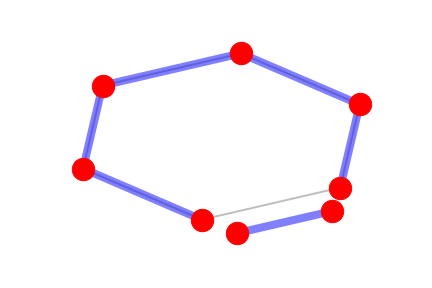}}}
 
 \,
 \subfloat[{FC}]{
	{\centering\includegraphics[scale=0.16]{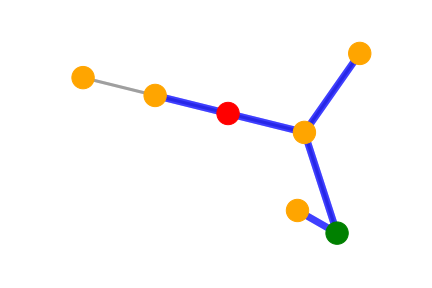}}}
 }
 \\
\subfloat[{GNN Deceived Explanation}]{
 \subfloat[{House}]{
	{\centering\includegraphics[scale=0.16]{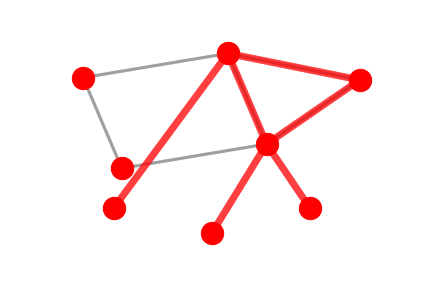}}}
 \,
 \subfloat[{Diamond}]{
	{\centering\includegraphics[scale=0.16]{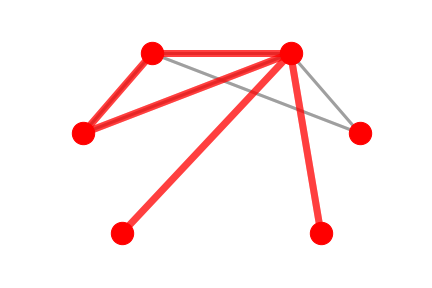}}}
 \,
 \subfloat[{Wheel}]{
	{\centering\includegraphics[scale=0.16]{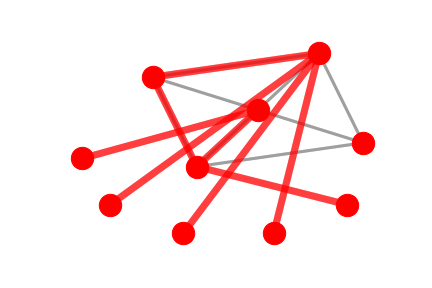}}}
 
 \,
 \subfloat[{Benzene}]{
	{\centering\includegraphics[scale=0.16]{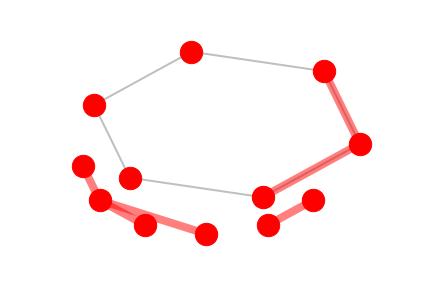}}}
 
 \,
 \subfloat[{FC}]{
	{\centering\includegraphics[scale=0.16]{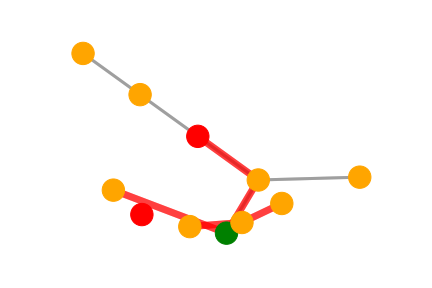}}}
 }
	\caption{Examples of how an explanatory subgraph outputted by PGExplainer changes when GNN is deceived. Top Row: Groundtruth Explanation; Middle Row: Explanation under correct predictions; Bottom Row: Explanation when GNN is deceived by a graph perturbation (2 edges are perturbed). 
 } 
	\vspace{-4mm}
  \label{fig:deceive}
\end{figure*}

\section{Discussion}
\label{app:discussion}

{\bf Instability of GNN explainers:}
We conduct experiments on the well-known GNNExplainer~\citep{GNNEx19} to show its unstable explanation results. Particularly, we run it 5 times and show the explanation results in Table~\ref{tab:instable}, where “Std” is the Standard Deviation of the explanation accuracy on test data across the 5 runs, and “Change Rate” is the average fraction of different explanation edges among every pair of the 5 runs. 
We can see both the variance and change rate are  large, meaning it is unreliable and difficult to pick any run of the result to design the robust explainer. 

\begin{table}[!h]
    \centering
    \begin{tabular}{c|c|c|c|c|c}
    \toprule
    Dataset&SG+House&SG+Diamond&SG+Wheel&Benzene&FC\\
    
    \Xhline{0.8pt} 
    Exp. Accuracy&0.624&0.368&0.475&0.276&0.226 \\
    \cline{1-6}
    Std& 9.3\%&9.7\%&8.1\%&10.9\%&12.4\%\\
    \cline{1-6}
    Change Rate&36.8\%&64.0\%&51.6\%&72.6\%&76.9\%\\
    \bottomrule
    \end{tabular}
    \caption{Instability of GNNExplainer}
    \label{tab:instable}
\end{table}

\begin{table}[!ht]
    \centering
    \renewcommand{\arraystretch}{1.2}
    \begin{tabular}{c|c|c|c|c}
        \hline
        \textbf{Dataset} & \textbf{Pred. Acc. (Avg)} & \textbf{Pred. Acc. (Std)} & \textbf{Exp. Acc. (Avg)} & \textbf{Exp. Acc. (Std)} \\ \hline
        Benzene & 0.722 & 0.002 & 0.466 & 0.007 \\ \hline
        FC      & 0.682 & 0.012 & 0.358 & 0.037 \\ \hline
    \end{tabular}
    \caption{Averaged prediction and explanation accuracy of XGNNCert on the two real-world datasets with 5 random node orderings.}
    \label{tab:gnncert_runs}
\end{table}

{\bf Node-order invariant vs. variant GNNs:} There exist both node-order invariant GNNs  (whose outputs are insensitive to the node ordering) and node-order variant GNNs  (whose outputs depend on the node ordering). Node-order invariant GNNs typically use, e.g., the mean and convolution aggregator such as  GCN~\citep{kipf2017semi}, SGC~\citep{wu2019simplifying}, GIN~\citep{xu2018powerful}, GAT~\citep{velivckovic2018graph}, GSAGE-mean~\citep{hamilton2017inductive}). 
Node-order variant GNNs are based on, e.g., random neighbor sampling~\citep{papp2021dropgnn,Rong2020DropEdge,Zeng2020GraphSAINT}, LSTM aggregator~\citep{hamilton2017inductive}, relational pooling~\citep{murphy2019relational}, positional embedding ~\citep{dwivedi2022graph,kreuzer2021rethinking,zhu2023hierarchical}. 

While node-order invariant GNNs are desirable in certain cases, recent works ~\citep{Loukas2020What,papp2021dropgnn,huang2022going} show node-order variant GNNs can produce better expressivity. This ranges from the classic GSAGE with LSTM to modern graph transformers~\citep{kreuzer2021rethinking,zhu2023hierarchical}. Our GNN voting classifier is node-order variant due to the property of hash function.  

{To further explore the impact of node permutations on {\name}, we randomly permute the input graphs 5 times and report {\name}'s average prediction and explanation accuracies on the two real-world datasets under the default setting in Table~\ref{tab:gnncert_runs}.  
We observe that {\name} exhibits stable prediction and explanation accuracies across the 5 runs. This demonstrates that, though {\name} is not inherently permutation invariant, its classification and explanation performance remain relatively stable to node permutations. 
We hypothesize that one possible reason for this stability is that {\name} augments the training graphs with a set of subgraphs to train the GNN classifier. This augmentation may mitigate the effect of node ordering, as the subgraphs are much smaller in size.}

{\bf Can this framework be extended to node-level or edge-level tasks?} Theoretically, it is possible, but needs technique adaptation. For example, in the node-level task, we are given a target node and its prediction by a GNN model, then GNN explainers aim to find the subgraph (usually from the target node’s neighboring graph) that is most important for the target node’s prediction. When applying the proposed framework for certifying node-level explainers, it becomes designing a graph division and voting strategy such that: with an arbitrary graph perturbation under a perturbation budget, 1) the voting classifier guarantees the correct prediction for \emph{the target node} on the perturbed graph, and 2) the voting explainer guarantees the explanation results on the perturbed graph and clean graph are close. The current graph division strategy is not applicable as all subgraphs have disjoint nodes, while the target node should be contained in all subgraphs for the node-level task. Hence, a key challenge is how to adapt the graph division and voting strategy to satisfy 1) and 2), particularly guaranteeing only a bounded number of subgraphs is affected when predicting the target node, while the explanations of these subgraphs' predictions are also retained. We acknowledge it is interesting future work to extend the proposed framework specially for node/edge-level explanation tasks.

\clearpage
\newpage

\end{document}